\title{On small-depth Frege proofs for \texorpdfstring{$\PHP$}{PHP}}
\DeclareMathOperator{\PHP}{PHP} 
\newcommand\IfRestateTF{%
  \ifx\label\thmt@gobble@label 
    \expandafter\@firstoftwo
  \else
    \expandafter\@secondoftwo
  \fi
}
\newcommand{\RestateRemark}{\IfRestateTF{{\normalfont\bfseries (Restated) }}{}}
\begin{document} 
\maketitle

\begin{abstract}
We study Frege proofs for the functional and onto
graph Pigeon Hole Principle
defined on the $n\times n$ grid where $n$ is odd.
We are interested in the case where each formula
in the proof is a depth $d$ formula in the basis given by
$\land$, $\lor$, and $\neg$.  We prove that in this situation the
proof needs to be of size exponential in $n^{\Omega (1/d)}$.
If we restrict each line in the proof to be of
size $M$ then the number of lines needed is exponential
in $n/(\log M)^{O(d)}$.   The main technical component of
the proofs is to design a new family of random restrictions
and to prove the appropriate switching lemmas.
\end{abstract}

\section{Introduction}
In this paper we study formal proofs of formulas in
Boolean variables encoding natural combinatorial principles.
We can think of these as tautologies but it is often
more convenient to think of them as contradictions.
When a certain formula, $F$, is a contradiction then its negation,
$\bar F$ is a tautology and we do not distinguish the two.
In particular in the below discussion we might call something
a tautology that the reader, possibly rightly, thinks of
as a contradiction.  We are equally liberal with usage
of the word ``proof'' which might more accurately be
called ``a derivation of contradiction''.  

We are given a set of local constraints that we call
``axioms''.  These are locally satisfiable but not globally in
that there is no global assignment that satisfies all the axioms.
A proof derives consequences
of the axioms and it is complete when it reaches 
an obvious contradiction such as $1=0$ or that an
empty clause contains a true literal.  

A key property of such a proof system is the kind
of statements that can be used and in
this paper we study Boolean formulas with at most $d$ alternations.
Traditionally one has studied formulas over the basis $\land$,
$\lor$, and $\neg$, where negations only appear at the 
inputs.  In this formalism one counts the number of alternations
between $\land$ and $\lor$.  We instead use the basis given
by $\lor$ and $\neg$ and count the number of alternations between
the two connectives.  There is an easy translation between the
two as $\land$ can be simulated by $\neg \lor \neg$.  We 
consider the case where $d$ is a constant or a slowly growing
function of the size of the input.

A fundamental and popular case is resolution
corresponding to $d=1$, where each formula is a disjunction of literals. 
It is far from easy to analyze resolution
but this proof system has been studied for a long time
and many questions are now resolved.  We do not want to
discuss the history of resolution but as it is very relevant
for the current paper let us mention that an  early milestone
was obtained by Haken \cite{haken} in 1985 when he proved
that the Pigeon Hole Principle ($\PHP$) requires exponential
size resolution proofs.  The $\PHP$ states that $n+1$ pigeons can 
fly to $n$ holes such that no two pigeons fly to the
same hole.  There are many ways to code this statement
using Boolean variables and we use the basic variant
which has $(n+1)n$ Boolean variables $x_{ij}$.  
Such a variable is true 
iff pigeon $i$ flies to the hole $j$.
The axioms say that for each $i$ there is a value
of $j$ such that $x_{ij}$ is true and for 
each $j$ there is at most one $i$ such that $x_{ij}$
is true.  This is clearly a contradiction but to
prove this counting is useful and resolution is not very 
efficient when it comes to counting.

\paragraph{Previous work.} The focus of this paper is the more
powerful proof system obtained for larger values of 
$d$ and here a pioneering result was obtained by 
Ajtai \cite{Ajtai94Complexity} proving superpolynomial lower bounds 
for the size of any proof for $\PHP$ for any fixed constant $d$.
The lower bounds of Ajtai were not explicit and Bellantoni
et al \cite{BPU92ApproximationAS}
gave the first such bounds, namely that depth $\Omega(\log^* n)$ is
needed for the size of the proof to be polynomial.
This was greatly improved 
in two independent works
by Kraj\'{i}\v{c}ek, Pudl\'{a}k, and Woods \cite{KPW95FregePHPExp}
and Pitassi, Beame, and Impagliazzo \cite{PBI93ExponentialLowerBounds},
respectively.  These two papers established lower bounds
for the size of any proof of the $\PHP$
of the form exponential in $n^{c^{-d}}$ where $c>1$, and gave
non-trivial bounds for depths as high as $\Theta (\log {\log n})$.

Related questions were studied in circuit complexity where the
central question is to study the size of a circuit needed
to compute a particular function.  Here a sequence of
results \cite{fss},\cite{sipser}, \cite{yao}, \cite{jhswitch}
established size lower bounds of
the form exponential in $n^{\Omega (1/d)}$ and obtained
strong lower bounds for $d$
as large as $\Theta (\log n / \log {\log n})$.  
The results were obtained for the parity function and
here it is easy to show that this function can be computed
by circuits of matching size.  To see that $\PHP$ 
allows proof of size exponential in $n^{O(1/d)}$ is
more difficult but was established in 2001 by
Atserias et al \cite{agg01}.

A technique used in many of these papers is called ``restrictions''.
The idea is simply to, in a more or less
clever way, give values to most of the
variables in the object under study and to analyze
the effect.  One must preserve\footnote{One does not really
preserve a function or a formula and an object of size $n$ is
reduced to a similar object of size $f(n)$ for some $f(n) < n$.}
the function computed (or tautology being proved) while 
at the same time be able to
simplify the circuits assumed to compute the function or
the formulas in the claimed proof.
An important reason that the lower bounds in circuit complexity
were stronger than those in proof complexity
is that it is easier to preserve a single function
than an entire tautology with many axioms.
Due to these complications strong lower bounds for
depths beyond $\Theta( \log {\log n})$, for any tautology, remained unknown for 
several decades.  

The first result that broke this barrier was 
obtained by Pitassi et al \cite{pitassi16frege}
who obtained super-polynomial lower bounds for
depths up to any $o(\sqrt {\log n})$.
The tautology investigated was first studied by Tseitin \cite{tseitin}
and considers a set of linear equations modulo two defined
by a graph.  The underlying graph for
\cite{pitassi16frege} is an expander.  These results were
later extended to depth $\Theta( \log n / \log {\log n})$
by H{\aa}stad \cite{jhtseitin}
and in this case the underlying graph is the two-dimensional square grid.
The bounds obtained were further improved by 
H{\aa}stad and Risse \cite{jhkr}.

All results mentioned so far only discuss total size.  For
resolution, each formula derived is a clause and hence of size 
at most $n$ but
for other proof systems it is interesting to study
the number of lines in the proof and the sizes of lines
separately.  Pitassi, Ramakrishnan,
and Tan \cite{PRT21} had the great insight that a technical
strengthening of the used methods yields much
stronger bounds for this measure than implied by the
size bounds.  They combined some of the techniques
of \cite{jhtseitin} with methods from \cite{pitassi16frege}
to establish that if each line is of size at most $M$ then
the number of lines in a proof that establishes the Tseitin
principle over the square grid needs to be exponential in $n2^{-d \sqrt {\log M}}$.
By using some additional ideas H{\aa}stad and Risse \cite{jhkr}
fully extended the techniques of \cite{jhtseitin}
to this setting improving the bounds to exponential
in $n/(\log M)^{O(d)}$.

\paragraph{Our results.}
Despite this progress, the over thirty year old question whether the
$\PHP$ allows polynomial size proofs of depth $O( \log \log n)$
remained open.  The purpose of this paper is to prove
that it does not and that lower bounds similar to
those for the Tseitin tautology also apply to the
$\PHP$.   To build on previous techniques we study what is known as the
graph $\PHP$ where the underlying graph is an odd size
two-dimensional grid.  

As the side length of the grid is odd, if one colors it
as a chess board,  the corners are of the same color and
let us assume this is white.
In the graph $\PHP$ on
the grid, there is a pigeon on each white square and
it should fly to one of the adjacent black squares
that define the holes.   This graph $\PHP$ is the result
of the general $\PHP$ where most variables are forced to 
take the value $0$.
Each pigeon is only given at most four alternatives.
Clearly any proof for the general $\PHP$ can be modified
to give a proof of the graph $\PHP$ by replacing some variables
by the constant false.  To limit ourselves
further we prove our lower bounds for what is known
as the functional and onto $\PHP$.  We add axioms saying
that each pigeon can only fly to one hole and each hole
receives exactly one pigeon.

Phrased slightly differently, the functional and onto
$\PHP$ on the grid says that
there is a perfect matching of the odd size grid and we heavily use
local matchings.   We can compare this to
the Tseitin tautology on the grid studied
by \cite{jhtseitin, PRT21, jhkr} that states that it is possible to
assign Boolean values
to the edges of the grid such that there is an odd number
of true variables next to any node.  As a perfect matching
would immediately yield such an assignment, the $\PHP$ is a stronger
statement and possibly easier to refute.
In particular, any lower bound for functional and onto $\PHP$ gives
the same lower bound for the Tseitin contradiction.
We do not, however, think of the results of the current paper
as a new proof for the results in  \cite{jhkr} (with slightly weaker bounds).
In fact, as the construction
of this paper shares many features with the construction
of \cite{jhkr}, it is better to think the current proof
as an adjustment of the latter proof taking care of complications
that we here only work with matchings.
Let us turn to discuss the main technical point,
namely to prove a ``switching lemma''.

By assigning values to most variables in a formula it is
possible to switch a small depth-two formula from being
a CNF to being a DNF and the other way around.
In the basic switching lemma used to prove circuit lower bounds
\cite{fss,yao, jhswitch}, uniformly random constant values 
are substituted
for a majority of the variables.  Such restrictions
are the easiest to analyze but are less useful in 
proof complexity as they do not preserve any interesting tautology.

To preserve a tautology or a complicated
function it is useful to replace several old
variables by the same new variable, possibly negated.
It is possible to be even more liberal and allow 
old variables to be replaced by slightly more complicated
expressions in the new variables.  This technique
was first introduced explicitly by Rossman, Servedio,
and Tan \cite{rst} when studying the depth hierarchy
for small-depth circuits but had been used
in more primitive form in earlier papers.
We use such generalized restrictions in this paper.

\paragraph{Multi-switching.}
The technical strengthening needed by \cite{PRT21} that
we discussed above is to improve the standard switching
lemma to what is commonly
known as a multi-switching lemma.  This concept was first introduced 
independently by H{\aa}stad \cite{jhmultiswitch}
and Impagliazzo, Mathews, and Paturi \cite{imp} to
study the correlation of small-depths circuits and 
simple functions such as parity.

In this setting one considers many formulas $(F^i)_{i=1}^m$
and the goal is to switch them all simultaneously
in the following sense.  There is a small depth (common)
decision tree such that at any leaf of the tree it
is possible to represent each $F^i$ by a small
formula of the other type.
It was the insight that multi-switching
can be used in the proof complexity setting that
made it possible for \cite{PRT21} to derive the
strong bounds on the number of lines in a proof
when each line is short.   The extension to
multi-switching turns out to be mostly technical
and not require any really new ideas.  In view
of this let us for the rest of this introduction
only discuss standard switching.

\paragraph{Techniques used.}
The most novel part of this paper is to design a new space
of restrictions that preserve the functional and
onto graph $\PHP$ on the grid.  It has many similarities
with the space introduced in \cite{jhtseitin}
and from a very high level point of view, the
proofs follow the same path.  At the more detailed
level in this paper we work with partial
matchings of the grid which is a more rigid
object than assignments that only satisfy the 
Tseitin condition of an odd number of true
variables next to any node.  This results
in considerable changes in the details
and as a result we get slightly worse bounds.

Once the space of restrictions is in place, two
tasks remain.  Namely, to prove the switching lemmas
and then use these bounds to derive the claimed
bounds on proof size.  This latter part hardly
changes compared to previous papers.  The switching lemmas follow
the same pattern as in \cite{jhtseitin} and \cite{jhkr}
but we need some new combinatorial lemmas and the
fact that we work with matchings calls
for some modifications.

\paragraph{Outline of the paper.}
We start with some preliminaries and recall some facts
from previous papers in Section~\ref{sec:prelim}.
We introduce our new space of random
restrictions in Section~\ref{sec:restrictions}.
The basic switching lemma is proved in
Section~\ref{sec:switch} and we use
it to establish the lower bound for proof size
in Section~\ref{sec:size}.  We give
the multi-switch lemma in Section~\ref{sec:mswitch} and
use it, in Section~\ref{sec:nlines}, to derive the lower bounds on the
number of lines in a proof.
We end with some very brief comments in Section~\ref{sec:conclusion}.
This is the final version of \cite{jhphpfocs} which was presented
at the 2023 FOCS conference.

\section{Preliminaries} \label{sec:prelim}

In this section we give some basic definitions and derive some
simple properties.  We also recall some
useful facts from related papers.

\subsection{The formula to refute}

We study the functional and onto $\PHP$ 
on the odd size $n\times n$ grid and when we want to emphasize
the size of the grid we use the term $\PHP_n$. Nodes are given indexed
by $(i,j)$ where $1 \leq i,j \leq n$ and a node is connected
to other nodes where one of the two coordinates is the same
and the other differ by $1$.  As opposed to some previous
papers studying the Tseitin tautology,
it is here important that we are on the grid and not on
the torus as we want a bipartite graph.  For any edge, $e=(v,w)$ of adjacent
nodes, $v$ and $w$, we have a variable $x_{e}$.  The axioms say that for each
$v$ if you consider the, at most four, variables adjacent to $v$
exactly one of them is true.  This is easy to write as
a 4-CNF.  The axioms look exactly the same
for holes and pigeons.  

This formula is a stronger statement compared to
the standard $\PHP$ formula with
$(n^2+1)/2$ pigeons and $(n^2-1)/2$ holes.  Our
parameters are slightly different from the standard parameters
where $n$ is the number of holes, but we trust the reader to
keep this in mind.

We assume that the grid is colored as a chess board and
that the corners are white.  Thus, pigeons correspond to white
nodes and holes to black nodes.  We prefer to use the colors to
distinguish the two as we later use larger objects that inherit
the same colors.

\subsection{Frege proofs}\label{sec:frege}

We consider proofs where each line in the proof
is either an axiom or derived from previous lines.
The derivation rules are not important and all we need
is that they are of constant size and sound.  
We use the same rules as \cite{pitassi16frege}, \cite{jhtseitin},
\cite{PRT21}, and \cite{jhkr}.   We demand that each formula that appears
is of depth at most $d$ and, as several previous papers, we do not allow
$\land$ and count the number of alternations of $\neg$ and $\lor$.
The $\land$ operator is simulated by $\neg \lor \neg$.
The rules are as follows.

\begin{itemize}

\item  (Excluded middle) $(p \lor \neg p)$

\item  (Expansion rule) $(p \rightarrow p \lor q)$

\item (Contraction rule) $(p \lor p) \rightarrow p$

\item (Association rule) $p \lor (q \lor r) \rightarrow (p \lor q) \lor r$

\item (Cut rule) $p \lor q, \neg p \lor r \rightarrow q \lor r$. 

\end{itemize}

\subsection{Decision trees and partial matchings of the grid}

As the use of decision trees is central to the current
paper let us define some notions.
In a standard decision tree, at each node we ask for 
the value of an input variable.  In this paper we instead allow
queries of the form.
\begin{center}
``To which node is $v$ matched?''
\end{center}

The answer to this question determines the value
of any variable next to $v$ and thus is more powerful
than a single variable question.  On the other hand 
it can be simulated by asking
ordinary variable questions for three variables
around $v$.  Thus within a factor of three in the
number of questions, this type of questions is equivalent
to variable queries.

We require that the answers along any branch in a decision
are locally consistent and keep all such branches.
The key property is that it is possible to extend
the partial matching given by the answers on this branch
to a large fraction of the grid. 
Product sets of collections of disjoint sets
of intervals are convenient for us.
The size of an interval is the number of elements
it contains.  The size of a partial matching is the
number of edges it contains.

\begin{definition}\label{def:consistent}
A partial matching $M$ on the $n \times n$ grid of
size $t$ is {\em locally consistent} 
if it can be extended to a complete
matching of a larger set $S \times T$.
We require that each of $S$ and $T$ is the union
of disjoint even-sized intervals such that the total
size of all intervals in each of $S$ and $T$ is at most $48t$.
\end{definition}

\paragraph{Remark.}  The constant $48$ in the definition is not optimal
and with a more careful analysis it can be decreased, but we
have opted for shorter proofs rather than optimal constants.

\medskip
We use two properties from locally consistent
matchings.  Firstly that given any node not matched
in the matching then it is possible to include
also this node with a suitable partner.  This is proved
in Lemma~\ref{lemma:cons} below.   The second property 
we need is that a sub-matching of any locally consistent
matching is also locally consistent.  This is proved
in Lemma~\ref{lemma:conssub}.  We do not exclude
that there is a simpler definition of locally consistent 
that achieve these two properties, but our
definition is one possibility.

Most of the time we think of parts of the grid as sets of
points in $\mathbb R^2$ with integer coordinates connected
in the natural way.  For pictures it is sometimes convenient
to view them as black and white unit size squares connected
if they share and edge.  We sometimes use this viewpoint.

\begin{lemma}\label{lemma:cons}
Suppose we have a locally consistent partial matching, $M$ of size at
most $n/50-9$ in the $n \times n$ grid and that we are given
a node $v$ not matched
by $M$.  It is then possible to find a partner, $w$, of $v$, 
such that $M$ jointly with $(u,w)$ is a locally
consistent matching.
\end{lemma}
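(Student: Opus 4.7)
Let $(S,T)$ witness the local consistency of $M$: $|S|+|T|\le 48|M|$, $S$ and $T$ are disjoint unions of even-sized intervals, and $M$ extends to a perfect matching $\hat M$ of the subgraph induced on $S\times T$. I argue by cases on whether $v\in S\times T$. If $v\in S\times T$, take $w$ to be the partner of $v$ under $\hat M$. Then $w$ is a grid-neighbor of $v$, $(v,w)\notin M$ since $v$ is $M$-unmatched, and $w$ is $M$-unmatched as well (otherwise $v$ would already be $M$-matched to a different node). Hence $M\cup\{(v,w)\}$ is a submatching of $\hat M$, and the same $(S,T)$ witnesses its local consistency, with the bound $|S|+|T|\le 48|M|\le 48(|M|+1)$ trivially met.

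If $v=(i_0,j_0)\notin S\times T$, without loss of generality $j_0\notin T$ (if only $i_0\notin S$, swap the roles of rows and columns; if both fail, combine the two arguments). Take $w=(i_0',j_0)$ with $i_0'\in\{i_0-1,i_0+1\}\cap[1,n]$; because $j_0\notin T$, the node $w$ lies outside $S\times T$ and is therefore $M$-unmatched. I enlarge $(S,T)$ to $(S',T')$ in three steps. First, adjoin $j_0$ to $T$ together with one adjacent pad column chosen from the gaps in $[1,n]\setminus T$: since every old interval of $T$ is even and exactly two new columns are inserted, whatever merging with abutting even intervals occurs, the new maximal interval of $T'$ containing $j_0$ has even size, so one pad always suffices. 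Second, if $i_0\notin S$ or the desired $i_0'\notin S$, extend $S$ analogously by at most two new rows. Third, between the (up to) two candidates for $i_0'$, choose one so that $\{i_0,i_0'\}$ sits at an even offset from the left endpoint of the maximal row-interval $I'$ of $S'$ containing it; since $|I'|$ is even, exactly one of the two candidate parities works. The total cost $|S'|+|T'|-|S|-|T|\le 4$ is well within the slack $48(|M|+1)-48|M|=48$, and $(S',T')$ is a pair of disjoint unions of even intervals containing both $v$ and $w$; note that $|S|+|T|<n$ by the hypothesis $|M|\le n/50-9$, which gives room to place pad rows and columns.

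Finally, I verify that $M\cup\{(v,w)\}$ extends to a perfect matching of $S'\times T'$. The induced subgraph on $S'\times T'$ decomposes into rectangular components $I\times J$ with $I,J$ even-sized maximal intervals, each of even area. On components unchanged from $S\times T$, reuse $\hat M$. On each modified component, the old sub-rectangles (coming from components of $S\times T$ that are now merged into the new component) are covered by the restriction of $\hat M$, while the newly introduced columns (the column $j_0$ and the pad column, together with any analogous new rows when $S$ was extended) are each of even height equal to the size of the enclosing row-interval, so their nodes split into consecutive vertical pairs; by the parity-matched choice of $i_0'$, this pairing in column $j_0$ includes the edge $(v,w)$. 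Combining these matchings across components yields the desired perfect matching of $S'\times T'$. The main obstacle lies in the Case~2 bookkeeping: the parity check that two new columns always restore even parity after merging with existing even intervals, the parity-matched choice of $i_0'$, and the coherent assembly of the restriction of $\hat M$ on old sub-rectangles with the new consecutive vertical pairings on each modified component. The constant $48$ in the definition of local consistency leaves ample slack for this case analysis to carry through without optimization.
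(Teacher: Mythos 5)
Your argument matches the paper's approach: enlarge $S$ and $T$ by at most two elements each so that $v$ and a grid-neighbor $w$ both lie in $S'\times T'$, then extend $\hat M$ by consecutive pairings in the new rows and columns, exploiting the even-interval structure. The paper's proof is considerably terser and leaves implicit the interval-merging and parity bookkeeping you carry out (and also does not spell out, as you do, that in the first case the $\hat M$-partner of $v$ is automatically $M$-unmatched); the substance of the two arguments is the same.
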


\begin{proof}
If $v$ is already in $S \times T$ we can use the same extension.
Suppose $v=(a,b)$ where $a \in S$ and $b \not\in T$.
It is easy to find $b'$ such that $T \cup \{ b, b'\}$
is a union of even size intervals.  Now we can add matchings
of $S \times b$ and $S \times b'$ using that $S$ is a union
of even size intervals.

The case when $a \not\in S$ and $b \in T$ is symmetric
and let us handle the case $a \not \in S$ and $b \not\in T$.
We can find $b'$ as in the previous case, enlarging $T$
to $T'= T \cup \{ b, b'\}$ and then proceed by
adding $a$ and a suitable $a'$ to $S$.

We enlarge the sizes of each of $S$ and $T$ by at most 2 but since
the size of the matching increases by one this is not a problem.
\end{proof}

Let us state the natural property that
any sub-matching of a locally consistent matching is
locally consistent.  As our proof is surprisingly
complicated we postpone the proof to Appendix~\ref{sec:proof}.

\begin{restatable}{lemma}{Conslemma}\label{lemma:conssub}\RestateRemark
Suppose $M$ is a locally consistent matching.
Then any subset of $M$ is locally consistent.
\end{restatable}

We are also interested in matching areas of the plane
which are almost a complete square.  The difference
is that some squares next to the perimeter are
removed.  We call such a removed square a ``dent''.

\medskip
\begin{figure}[h]
\begin{center}
\begin{tikzpicture}[scale=1.288]
\draw (0,0) rectangle(0.5,.5);
\fill[{lightgray}] (0,0) rectangle(0.5,.5);
\fill[{lightgray}] (1.0,0) rectangle(1.5,.5);
\fill[{lightgray}] (0.5,0.5) rectangle(1.0,1.0);
\fill[{lightgray}] (1.5,0.5) rectangle(2.0,1.0);
\fill[{lightgray}] (2.5,0.5) rectangle(3.0,1.0);
\fill[{lightgray}] (3.5,0.5) rectangle(4.0,1.0);
\fill[{lightgray}] (0.0,1.0) rectangle(0.5,1.5);
\fill[{lightgray}] (1.0,1.0) rectangle(1.5,1.5);
\fill[{lightgray}] (2.0,1.0) rectangle(2.5,1.5);
\fill[{lightgray}] (3.0,1.0) rectangle(3.5,1.5);
\fill[{lightgray}] (0.5,1.5) rectangle(1.0,2.0);
\fill[{lightgray}] (1.5,1.5) rectangle(2.0,2.0);
\fill[{lightgray}] (2.5,1.5) rectangle(3.0,2.0);
\fill[{lightgray}] (0.0,2.0) rectangle(0.5,2.5);
\fill[{lightgray}] (1.0,2.0) rectangle(1.5,2.5);
\fill[{lightgray}] (2.0,2.0) rectangle(2.5,2.5);
\fill[{lightgray}] (3.0,2.0) rectangle(3.5,2.5);
\fill[{lightgray}] (0.5,2.5) rectangle(1.0,3.0);
\fill[{lightgray}] (1.5,2.5) rectangle(2.0,3.0);
\fill[{lightgray}] (2.5,2.5) rectangle(3.0,3.0);
\fill[{lightgray}] (3.5,2.5) rectangle(4.0,3.0);
\fill[{lightgray}] (0.0,3.0) rectangle(0.5,3.5);
\fill[{lightgray}] (1.0,3.0) rectangle(1.5,3.5);
\fill[{lightgray}] (2.0,3.0) rectangle(2.5,3.5);
\fill[{lightgray}] (3.0,3.0) rectangle(3.5,3.5);
\fill[{lightgray}] (0.5,3.5) rectangle(1.0,4.0);
\fill[{lightgray}] (1.5,3.5) rectangle(2.0,4.0);
\fill[{lightgray}] (2.5,3.5) rectangle(3.0,4.0);
\fill[{lightgray}] (3.5,3.5) rectangle(4.0,4.0);

\fill[black] (0.5,0) rectangle(1,.5);
\draw (.5,.5) rectangle(1,1);

\draw (1,0) rectangle(1.5,.5);
\fill[black] (1.5,0) rectangle(2,.5);
\fill[black] (1,.5) rectangle(1.5,1);
\draw (1.5,.5) rectangle(2,1);

\fill[black] (2.5,0) rectangle(3,.5);
\fill[black] (2,.5) rectangle(2.5,1);
\draw (2.5,.5) rectangle(3,1);

\fill[black] (3.5,0) rectangle(4,.5);
\fill[black] (3,.5) rectangle(3.5,1);
\draw (3.5,.5) rectangle(4,1);

\draw (0,1) rectangle(0.5,1.5);
\fill[black] (0.5,1) rectangle(1,1.5);
\fill[black] (0,1.5) rectangle(0.5,2);
\draw (.5,1.5) rectangle(1,2);

\draw (1,1) rectangle(1.5,1.5);
\fill[black] (1.5,1) rectangle(2,1.5);
\fill[black] (1,1.5) rectangle(1.5,2);
\draw (1.5,1.5) rectangle(2,2);

\draw (2,1) rectangle(2.5,1.5);
\fill[black] (2.5,1) rectangle(3,1.5);
\fill[black] (2,1.5) rectangle(2.5,2);
\draw (2.5,1.5) rectangle(3,2);

\draw (3,1) rectangle(3.5,1.5);
\fill[black] (3.5,1) rectangle(4,1.5);
\fill[black] (3,1.5) rectangle(3.5,2);

\draw (0,2) rectangle(0.5,2.5);
\fill[black] (0.5,2) rectangle(1,2.5);
\fill[black] (0,2.5) rectangle(0.5,3);
\draw (.5,2.5) rectangle(1,3);

\draw (1,2) rectangle(1.5,2.5);
\fill[black] (1.5,2) rectangle(2,2.5);
\fill[black] (1,2.5) rectangle(1.5,3);
\draw (1.5,2.5) rectangle(2,3);

\draw (2,2) rectangle(2.5,2.5);
\fill[black] (2.5,2) rectangle(3,2.5);
\fill[black] (2,2.5) rectangle(2.5,3);
\draw (2.5,2.5) rectangle(3,3);

\draw (3,2) rectangle(3.5,2.5);
\fill[black] (3.5,2) rectangle(4,2.5);
\fill[black] (3,2.5) rectangle(3.5,3);
\draw (3.5,2.5) rectangle(4,3);

\draw (0,3) rectangle(0.5,3.5);
\fill[black] (0.5,3) rectangle(1,3.5);
\fill[black] (0,3.5) rectangle(0.5,4);
\draw (.5,3.5) rectangle(1,4);

\draw (1,3) rectangle(1.5,3.5);
\fill[black] (1.5,3) rectangle(2,3.5);
\draw (1.5,3.5) rectangle(2,4);

\draw (2,3) rectangle(2.5,3.5);
\fill[black] (2.5,3) rectangle(3,3.5);
\draw (2.5,3.5) rectangle(3,4);

\draw (3,3) rectangle(3.5,3.5);
\fill[black] (3.5,3) rectangle(4,3.5);
\fill[black] (3,3.5) rectangle(3.5,4);
\draw (3.5,3.5) rectangle(4,4);
\end{tikzpicture}
\end{center}
\caption{A square with 3 white and 3 black dents.}
\end{figure}

\begin{lemma} \label{lemma:matchcenter}
Suppose we have a square with even side length 
and which has the same number of
white and black dents.
Suppose further that there are at most $M$ dents and no dent is within $M$
of a corner.  In this situation there is always a matching of the
square.
If we have a square with odd side length and white corners,
then the similar statement is true assuming we have 
one more white dent.
\end{lemma}

\begin{remark}
The condition of any dent being at distance at least
$M$ from the perimeter is not exactly sharp.
Some condition on the number of dents next to
a corner is, however, needed as it is easy to see that removing many white
squares next to a corner gives a local surplus of black
squares that cannot be matched.  As the exact condition
is not important we opted for a simple one.
\end{remark}

\begin{proof}
We could use the general condition for the existence 
of a matching given in Lemma~\ref{lemma:nomatch} but
as the situation is very structured we find a direct
proof to be easier to follow.  It also gives better
constants (even though, of course, this is not very important).

Let us first do the case of even side length and
start by sketching the argument.  We construct
a matching in an onion-like fashion using as many pairs
along the perimeter as possible.  Let us say that two dents
are ``adjacent'' if the perimeter is dent-free between these
two nodes.  If two adjacent dents are 
at an even distance then it is possible to perfectly
match all nodes between the two dents.  This happens
iff the two adjacent dents are of different colors.
If two adjacent dents are of the same color we need
to match one node with the node towards the interior
creating a new dent in the new square which has side
length two less than the original square.  Let us
say this more formally.

Suppose the side length is $S$.
Take any dent, and assume for concreteness that it is white.
Start matching nodes two and two
along the perimeter starting with the node next to this dent. 
This is straightforward until we hit the next dent.
If this dent is black we get a perfect match along the perimeter
while if it is white we are forced to create a new white dent.
We continue on the other side of this dent and go all around
the square.  We get a new square of side length $S-2$ and
some dents.  A dent remains iff it is the same color 
as the dent preceding it.

As we have both black and white dents, the number
of dents has decreased by at least two and the distance
to the corner has decreased by at most 2.  As each
matched pair contains one black square and one white
square the number of white dents equals the number
of black dents.
We repeat this process until there are no remaining dents.
The rest is easy to match.

If the side length is odd then the process stops
when there is only one white dent on the boundary.
Also in this case the remaining square is easy to
match.
\end{proof}
It is easy to see that in most situation where we
apply Lemma~\ref{lemma:matchcenter} there are many possible
matchings.  It is convenient for us
to think of the matching as given uniquely by the boundary conditions.
This can be done in many ways and any deterministic procedure
to go from the set of dents to the matching works
for us.

Lemma~\ref{lemma:cons} ensures that in any decision tree with questions of
the form ``To which node is $v$ matched?'' there is one
possible answer such that the partial matching created
by the path so far results in a locally consistent matching.
By simply erasing any branch which is not locally consistent
we maintain the property of each branch being
locally consistent.   

Lemma~\ref{lemma:conssub} ensures a slightly more subtle property.
Suppose we are given a decision tree, $T$,  of small depth and we have
a small partial matching, $\tau$ and we want to explore $T$ given
$\tau$.   Lemma~\ref{lemma:cons} makes sure that as long
as the sum of the size of $\tau$ and the depth of the decision
tree is small, it is always possible to find a locally
consistent partner for any node
queried by $T$.  Lemma~\ref{lemma:conssub} makes sure that 
any such answer is also present in $T$ and hence it has not
already been pruned away.  Let us state this as a lemma

\begin{lemma}\label{lemma:extend}
Let $T$ be a decision tree of depth $t$ and $\tau$
a locally consistent matching of size $t'$.
Then, provided $t+t' \leq n/50$ the
decision tree obtained from pruning all branches
of $T$ which are not consistent with $\tau$
is a non-empty decision tree.
\end{lemma}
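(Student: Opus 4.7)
The plan is to construct, by descending $T$ one level at a time, an actual root-to-leaf branch of $T$ whose labels along the way are consistent with $\tau$. I maintain the invariant that after processing a node at depth $k$, the partial matching $\sigma$ recorded by the answers taken so far satisfies: (i) $\sigma \cup \tau$ is a valid partial matching, (ii) $\sigma \cup \tau$ is locally consistent, and (iii) the node reached has not been pruned from $T$. Initially $\sigma=\emptyset$, so the invariant holds since $\tau$ itself is locally consistent.

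At an internal node asking ``to which node is $v$ matched?'' I split into three cases. If $v$ is already matched in $\sigma$, the only edge of $T$ to follow is the one that records this existing match, which exists in $T$ by the invariant. If $v$ is matched in $\tau$ to some $w$ but not in $\sigma$, then $\sigma \cup \{(v,w)\} \subseteq \sigma \cup \tau$, which is locally consistent by the invariant; by Lemma~\ref{lemma:conssub} the submatching $\sigma \cup \{(v,w)\}$ is itself locally consistent, so the branch labeled $w$ was kept in $T$, and I follow it. If $v$ is unmatched in $\sigma \cup \tau$, I apply Lemma~\ref{lemma:cons} to $\sigma \cup \tau$: its size is at most $t+t' \le n/50$, so (modulo a small additive constant that the $n/50$ bound absorbs) the hypothesis of Lemma~\ref{lemma:cons} is met and I obtain a partner $w$ for $v$ making $\sigma \cup \tau \cup \{(v,w)\}$ locally consistent. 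Again by Lemma~\ref{lemma:conssub}, $\sigma \cup \{(v,w)\}$ is locally consistent, so the branch labeled $w$ exists in $T$; I follow it and update $\sigma \leftarrow \sigma \cup \{(v,w)\}$. In all three cases the invariant is restored.

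Continuing in this fashion, I cannot get stuck: every internal node offers at least one edge that I may take, and the path I construct stays consistent with $\tau$ by construction, so it survives pruning. When a leaf is reached, I have exhibited a non-pruned branch of $T$, which proves the lemma.

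The only delicate point is conceptual rather than computational: one must notice that the answers that define $T$'s surviving branches are determined purely by local consistency of the query-matching $\sigma$, independently of $\tau$, so the submatching argument via Lemma~\ref{lemma:conssub} is exactly what is needed to transfer local consistency of $\sigma\cup\tau$ back to local consistency of $\sigma$ alone. This is what ensures that the partner produced by Lemma~\ref{lemma:cons} corresponds to an actual surviving edge of $T$ rather than one that was pruned. The size bookkeeping is routine given the assumption $t+t' \le n/50$.
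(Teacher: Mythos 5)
Your proof is correct and is essentially a careful formalization of the argument the paper sketches informally in the two paragraphs immediately preceding the statement of the lemma (the paper itself gives no explicit proof block): descend $T$ one query at a time, at each node use Lemma~\ref{lemma:cons} on the current union with $\tau$ to produce a locally consistent answer, and use Lemma~\ref{lemma:conssub} to see that the corresponding sub-matching of query answers alone is locally consistent and hence was kept in $T$. The only loose thread is the additive slack needed for Lemma~\ref{lemma:cons}, which requires size at most $n/50-9$ rather than $n/50$; you flag this, and the paper is explicitly cavalier about such constants, so it is not a real gap.
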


We use this lemma in many places without explicitly referring
to it.  We denote the resulting decision tree by $T \lceil_{\tau}$.

We say that a decision tree is a $1$-tree if all its
leafs are labeled one and similarly we have $0$-trees.
It might seem redundant to allow such trees but 
when doing operations on decision trees they naturally occur.

\subsection{\protect{\texorpdfstring{$t$}{t}-evaluations}}

The concept of $t$-evaluations was introduced by Krajíček et al.\
\cite{KPW95FregePHPExp} and is a very convenient tool for proving
lower bounds on proof size. Here we follow the presentation of Urquhart and Fu
\cite{uf} while using the notation of 
\cite{jhtseitin} and \cite{jhkr}.
A $t$-evaluation, $\varphi$, is a map from a set of formulas to decision 
trees of depth at most $t$.  We always have the property that this
set of formulas is closed under taking sub-formulas.
We want the following properties.

\begin{enumerate}

\item The constant true is represented by a $1$-tree and the
constant false is represented by a $1$-tree.

\item If $F$ is an axiom of the $\PHP$ \label{axiomone}
contradiction then $\varphi(F)$ is a $1$-tree.

\item If $F$ is a single variable then
$\varphi(F)$ is the natural decision tree of depth $1$ defining
the value of this variable.

\item If $\varphi(F)=T$ then $\varphi(\neg F)$
is a decision tree with the same internal structure as $T$
but where the value at each leaf is negated.

\item Suppose $F= \lor F_i$. \label{rule:or}
Consider a leaf in $\varphi (F)$ and the assignment, $\tau$ leading to this
leaf.  If the leaf is labeled $0$ then for each $i$ 
$\varphi(F_i) \lceil_\tau$ is a $0$-tree and if the leaf is
labeled $1$ then for some $i$, $\varphi(F_i) \lceil_\tau$ is a $1$-tree.

\end{enumerate}

Remember that in our situation the nodes of the decision tree 
ask to which other node a certain node $v$ is matched and we
also require the assignment along a path in the decision tree
to be locally consistent. In this situation,  
condition \ref{axiomone} follows from
the other conditions, but in general this is not the case.
The key property for $t$-evaluations is the following
lemma.

\begin{lemma}\label{lemma:noproof}
Suppose we have a derivation using the rules
of Section~\ref{sec:frege} starting with the
axioms of the functional onto $\PHP$ on the $n \times n$ grid.
Let $\Gamma$ be the set of all sub-formulas of this derivation
and suppose  there is a $t$-evaluation whose range includes $\Gamma$ where
$ t \leq n/150$. 
Then each line in the derivation is mapped to a $1$-tree.
In particular we do not reach a contradiction.
\end{lemma}

\begin{proof}
This is the lemma that corresponds to Lemma 6.4 of
\cite{jhtseitin}.
It relies on two properties, namely that each axiom is represented
by a $1$-tree, and that the derivation rules preserve this property.
The first property is ensured by the local consistency
of any branch in a decision tree.  The second property
follows from the fact that the derivation rules are sound
and we never ``get stuck'' in a decision tree.  By this
we mean that it always possible to continue a branch
in a decision tree keeping the values locally consistent.
This is ensured by Lemma~\ref{lemma:cons} and Lemma~\ref{lemma:conssub}.

The proof is by induction over the number of lines in the proof
and we need to discuss each of the five rules.  As this is
rather tedious we here only discuss the Cut rule which
is the most interesting rule.  We are confident
that the reader can handle the other rules and if this is not
the case, the other cases are discussed in detail in the
proof of Lemma 6.4 of \cite{jhtseitin}.

The application of the cut rule derives $F= q \lor r$ from
$p \lor q$ and $\neg p \lor r$.  By induction $\varphi(p \lor q)$
and $\varphi (\neg p \lor r)$ are both $1$-trees and we must
establish that so is $\varphi (F)$.
For contradiction, take a supposed leaf with label $0$ 
in $\varphi(F)$ and let $\tau$ be the assignment leading to this leaf. 
We know that $\varphi(q)\lceil_\tau$ and $\varphi(r)\lceil_\tau$ are
both $0$-trees.
Consider any path in $\varphi (p) \lceil_\tau$ and let $\tau_1$ be
the assignment of this path.  Assume this leaf is
labeled $0$, the other case being similar.  Now take
any path in $\varphi(p \lor q)\lceil_{\tau \tau_1}$.
As this is a $1$-tree the label at this path must
be $1$.  This contradicts that $\varphi(p) \lceil_{\tau_1}$
as well as $\varphi(q) \lceil_{\tau}$ are both $0$-trees.

We need sets of three assignments of size $t$ to be extendable
in  a locally consistent way and $3t \leq n/50$ ensures
that this is possible.
\end{proof}

When studying the number of lines in a proof where each line
is short, an extension of Lemma~\ref{lemma:noproof} is needed.
This was first done in \cite{PRT21} and we rely on
the argument of \cite{jhkr}.  

In this situation it is not possible to have
a single $t$-evaluation whose range is all the sub-formulas
that appear in the proof.  Instead, we have separate
$t$-evaluations for each line.  Furthermore, for each
line these $t$-evaluations live at the leaves of a decision
tree.  As these are very similar to what is called $\ell$-common decision
trees (as defined in \cite{jhmultiswitch}) we call them 
$\ell$-common $t$-evaluations.

\begin{definition}
A set of formulas $(F_i)_{i=1}^M$ has an $\ell$-common $t$-evaluation if
there is a decision tree of depth $\ell$ with
the following properties.  Take any leaf of this decision
tree and let $\tau$ be the partial assignment defined by
this path.  At this leaf we have a $t$-evaluation
$\varphi_{\tau}$ of the formulas $(F^i\lceil_{\tau}))_{i=1}^M$.
\end{definition}

We need an extension of Lemma~\ref{lemma:noproof}.
Suppose we are given a proof of $\PHP$ and for each line,
$\lambda$, let $\Gamma_\lambda$ be the set of all sub-formulas
of the formula appearing in this line and let $\varphi_{\tau}^{\lambda}$
be the $t$-evaluation given at the leaf defined by 
$\tau$ that contains $\Gamma_\lambda$ it is domain.

\begin{lemma}\label{lemma:noproofm}
Suppose we have a derivation using the rules
of Section~\ref{sec:frege} starting with the
axioms of the functional onto $\PHP$ on the $n \times n$ grid.
Suppose each $\Gamma_\lambda$ allows an $\ell$-common $t$-evaluation.
Suppose $\ell +t \leq n/300$, then
each line of the derivation is mapped to a $1$-tree
in all leaves of its decision tree.
\end{lemma}

The proof of this is not very complicated but needs some
notation.  The important property is
that the copies of the same formula appearing in multiple lines
are mapped in a consistent way.  Take a formula
$F$ appearing on lines $\lambda_1$ and $\lambda_2$.
Take any leaf in the common decision tree of 
$\lambda_1$ defined by an assignment $\tau_1$.
Take any assignment $\tau_1'$ consistent with $\tau_1$ in
the decision tree $\varphi_{\tau_1}^{\lambda_1} (F)$ leading
to a branch labeled $b$.  We claim the following.

\begin{lemma}\label{lemma:consteval}
For any $F$, $\tau_1$, $\tau_1'$, and $b$ as described above consider the
decision tree $T=\varphi_{\tau_2}^{\lambda_2} (F)$ 
where $\tau_2$ is locally consistent with $\tau_1$
and $\tau_1'$.  Then any path in $T$ defined
by an assignment $\tau_2'$ consistent with
$\tau_1$, $\tau_1'$ and $\tau_2$ leads to
a leaf labeled $b$.
\end{lemma}

\begin{proof}
We prove this by induction over the complexity of $F$.
The lemma is true when $F$ is a single variable
as this follows from the definition of a $t$-evaluation.

Now for the induction step.  When $F$ is of
the form $\neg F'$, then the the
lemma follows from the basic properties of
$t$-evaluations and the inductive case that
the lemma is true for $F'$.

Finally, assume that $F=\lor_{i=1}^m F_i$ and suppose
it violates the lemma getting a tuple $\tau_1, \tau_1',
\tau_2$, and $\tau_2'$ leading to leaves with
different values in the two trees.  Assume with
without loss of generality that $b=1$, the situation
being symmetric between the two lines.

By the definition of $t$-evaluations there
must be an $i$ such that 
$\varphi_{\tau_1}^{\lambda_1} (F_i) \lceil_{\tau_1'}$
is a $1$-tree.  Similarly, for
$\lambda_2$, we know that
$\varphi_{\tau_2}^{\lambda_2} (F_i) \lceil_{\tau_2'}$
is a $0$-tree.  As the four-tuple of restrictions is
consistent and $F_i$ is simpler than $F$ this
contradicts the inductive assumption.

Here we need two sets of pairs of assignments to be
locally consistent and hence $t+ \ell \leq n/100$
is sufficient.
\end{proof}

Lemma~\ref{lemma:consteval} tells us that
the same formula when considered on different
lines and under different $t$-evaluations
behaves the same.  The proof of Lemma~\ref{lemma:noproofm}
is now essentially identical to the
proof of Lemma~\ref{lemma:noproof}.  In this situation
we need triplets of three pairs of assignments to be
consistent and hence we use $t + \ell \leq n/300$.
We omit the details.

\section{Restrictions}\label{sec:restrictions}

As stated in the introduction we use a slightly more complicated object than a restriction
which normally only gives values to some variables.
A restriction in our setting fixes many variables to constants but
also substitutes the same variable or its negation for
some variables.  In a few cases an old variable is substituted
by a small logical formula which is a disjunction of
size at most three.

We are given an instance of the $\PHP$ on the $n\times n$ grid
and a restriction, for a suitable parameter $T$, reduces it to a smaller
instance on the $(n/T) \times (n/T)$ grid where $(n/T)$ is an odd number.

We divide the grid in to  $(n/T)^2$ squares, called super-squares,
each with side length $T$.  Inside each super-square 
there are $\Delta$ (for a parameter to be fixed) smaller squares that we from
now on call ``mini-squares'' (of side length larger than 1, but
smaller than the super-squares).  We pick one
mini-square inside each super-square and let these represent
the smaller instance.  Each super-square has a color as given
by a chess board coloring of the reduced instance.  For instance
the corner super-squares are all white.  Each mini-square has
the color of its super-square.

Between each mini-square, $s_i$ and any mini-square
$s_j'$ in an adjacent super-square we have $3R$ (for a parameter to
be chosen) edge-disjoint paths,
each of even length.  We can match exactly all vertices on any such
path by matching each node of the path to the appropriate adjacent
node.  We are also interested in matching each node to the other neighbor
on the path and in this case we need to include 
one node in each of the two mini-squares
to which it is attached creating a dent in the
sense of Lemma~\ref{lemma:matchcenter}.
We think of this as using
the path as an augmenting path and hence we sometimes refer to these
paths as ``augmenting paths''.  A matching on such a path is of
type $0$ if it does not include any node from the attached
mini-squares and otherwise it is of type $1$. 

We group
the $3R$ paths in groups of three and within each
group, two attach at a node which is the same color as
the color as the mini-square, while the third one
attaches to a node of opposite color.  
As each path is of even length, the nodes of
attachment at the two end-points are of different colors but so are the
mini-squares to which the path attaches.  Hence
the  attachment points are either both the same color as
the respective mini-square or both the opposite color.
The first two augmenting paths
may be of type $0$ or $1$.  The third path is always of
type $1$ and as these paths play little role in the argument
we mostly ignore them from now on.

Let us point out that for mini-squares in super-squares on the perimeter of the
entire grid, in some direction(s) there are no adjacent 
mini-square.  In this situation, no paths attach on the
side with no neighbor.

For each augmenting path $P$ we have a corresponding Boolean
variable $y_P$ which indicates whether it is of type $0$
or type $1$.  By the $R$ fixed paths of type $1$
we can conclude that for any mini-square, if
exactly half of its varying  paths are of type $0$ (and
hence the other half is type $1$) then its interior
has equally many white nodes as black
nodes remaining.  

We set up our restrictions such that it is uniquely
determined by the values for the variables $y_P$.
Outside the paths and
the mini-squares we more or less have a fixed matching.
This concludes the high level description of
a restriction and let us repeat the argument, now giving all
formal details.

\subsection{Details of mini-squares and paths}

It is convenient to
use the concept of a {\em brick} which is a square of size
$30 \times 30$.  We think of routing paths
through bricks, but in concrete terms such a path
is given by two points of attachment of different colors.
If the path is of type $0$ these two potential dents
are matched inside the brick and if the path is of type
$1$ these two points are matched to the appropriate neighbor
in the neighboring brick.   We make sure that the 
there are at most three dents in each side of the brick and
of distance at least $12$ to any corner and hence we can
use Lemma~\ref{lemma:matchcenter} to find a matching
of the interior.  This matching might not look like
a path going through the brick but from an abstract point
this is a good way to see it.

We have two parameters, $R$ and $\Delta$ and we
think of them as follows.  $R$ is needed to ensure
that of $R$ (almost) random bits it is likely that
about half of them are true.  For this reason $R=\Theta (\log n)$.
The parameter $\Delta$ is picked to make sure that 
$150 R \Delta^2$ is slightly less than $T$.

\paragraph{Mini-squares.} All but one mini-square have side length $120 \Delta R$
(or, equivalently, $4 \Delta R$ bricks).
We have $4\Delta R$ bricks on each side and 
the interesting part is the middle $2\Delta R$ bricks.
The bricks close to the corners are only used to 
ensure that we can perfectly match the interior of the
mini-square according to Lemma~\ref{lemma:matchcenter}.

Half of the middle bricks are used to route the $3\Delta R$ paths
to the $\Delta$ mini-squares in the super-square in the
given direction.  The other $\Delta R$ middle bricks
are used to make sure that there is enough space to
route the paths through different blocks between
the mini-squares.  

\paragraph{Designated survivor.} We have a special single mini-square in the top
left corner super-square\footnote{This could be any white
super-square and choosing this particular super-square is just to
make some fixed choice.}.   It has side length
$120 \Delta R +1$ which in particular is an odd number
but is otherwise like the other mini-squares.
We call this the ``designated survivor''.

\begin{figure} [h]
\begin{center}
  \scalebox{1.288}{
\begin{picture} (280,210)

\put(20,47){$T$}
\put(62,5){$T$}

\multiput(35,20)(60,0){4}
{\line(0,1){180}}
\multiput(35,20)(0,60){4}
{\line(1,0){180}}

\put(36,20){\line(0,1){174}}
\put(41,199){\line(1,0){174}}

\multiput(35,74)(0,60){3}
{\line(1,0){12}}
\multiput(41,68)(0,60){3}
{\line(1,0){12}}
\multiput(47,62)(0,60){3}
{\line(1,0){12}}
\multiput(53,56)(0,60){3}
{\line(1,0){12}}
\multiput(59,50)(0,60){3}
{\line(1,0){12}}
\multiput(65,44)(0,60){3}
{\line(1,0){12}}
\multiput(71,38)(0,60){3}
{\line(1,0){6}}

\put(35,198){\line(1,0){6}}
\put(35,196){\line(1,0){6}}

\put(37,200){\line(0,-1){6}}
\put(39,200){\line(0,-1){6}}

\put(44,188){\line(0,-1){37}}
\put(44,151){\line(1,0){18}}
\put(62,151){\line(0,-1){35}}

\put(56,176){\line(0,-1){30}}
\put(56,146){\line(-1,0){6}}
\put(50,146){\line(0,-1){18}}

\put(44,128){\line(0,-1){35}}
\put(44,93){\line(1,0){12}}
\put(56,93){\line(0,-1){31}}

\put(47,71){\line(1,0){40}}
\put(87,71){\line(0,-1){12}}
\put(87,59){\line(1,0){26}}

\put(113,125){\line(1,0){40}}
\put(153,125){\line(0,-1){12}}
\put(153,113){\line(1,0){26}}

\put(119,119){\line(1,0){25}}
\put(144,119){\line(0,-1){4}}
\put(144,115){\line(1,0){35}}

\multiput(41,80)(0,60){3}
{\line(0,-1){12}}
\multiput(47,74)(0,60){3}
{\line(0,-1){12}}
\multiput(53,68)(0,60){3}
{\line(0,-1){12}}
\multiput(59,62)(0,60){3}
{\line(0,-1){12}}
\multiput(65,56)(0,60){3}
{\line(0,-1){12}}
\multiput(71,50)(0,60){3}
{\line(0,-1){12}}
\multiput(77,44)(0,60){3}
{\line(0,-1){6}}

\multiput(95,74)(0,60){3}
{\line(1,0){12}}
\multiput(101,68)(0,60){3}
{\line(1,0){12}}
\multiput(107,62)(0,60){3}
{\line(1,0){12}}
\multiput(113,56)(0,60){3}
{\line(1,0){12}}
\multiput(119,50)(0,60){3}
{\line(1,0){12}}
\multiput(125,44)(0,60){3}
{\line(1,0){12}}
\multiput(131,38)(0,60){3}
{\line(1,0){6}}

\multiput(101,80)(0,60){3}
{\line(0,-1){12}}
\multiput(107,74)(0,60){3}
{\line(0,-1){12}}
\multiput(113,68)(0,60){3}
{\line(0,-1){12}}
\multiput(119,62)(0,60){3}
{\line(0,-1){12}}
\multiput(125,56)(0,60){3}
{\line(0,-1){12}}
\multiput(131,50)(0,60){3}
{\line(0,-1){12}}
\multiput(137,44)(0,60){3}
{\line(0,-1){6}}

\multiput(155,74)(0,60){3}
{\line(1,0){12}}
\multiput(161,68)(0,60){3}
{\line(1,0){12}}
\multiput(167,62)(0,60){3}
{\line(1,0){12}}
\multiput(173,56)(0,60){3}
{\line(1,0){12}}
\multiput(179,50)(0,60){3}
{\line(1,0){12}}
\multiput(185,44)(0,60){3}
{\line(1,0){12}}
\multiput(191,38)(0,60){3}
{\line(1,0){6}}

\multiput(161,80)(0,60){3}
{\line(0,-1){12}}
\multiput(167,74)(0,60){3}
{\line(0,-1){12}}
\multiput(173,68)(0,60){3}
{\line(0,-1){12}}
\multiput(179,62)(0,60){3}
{\line(0,-1){12}}
\multiput(185,56)(0,60){3}
{\line(0,-1){12}}
\multiput(191,50)(0,60){3}
{\line(0,-1){12}}
\multiput(197,44)(0,60){3}
{\line(0,-1){6}}

  \end{picture}}
\caption{The placement of mini-squares and super-squares and some
paths.  The designated survivor is the checkered mini-square.
The matchings along top row and leftmost column are indicated
by solid lines.
 \label{fig:grid}}
\end{center}
\end{figure}

\paragraph
{The first row and first column.} 
In the top row, the $n-120 R \Delta -1$ nodes outside
the designated survivor are matched in a horizontal
matching.  Similarly, the nodes in the leftmost column
outside the designated survivor are matched in a 
vertical matching.  This essentially eliminates one
row and one column outside the top left
super-square and we assume for convenience that $n\equiv 1$
modulo $30$ and we cover the rest of the grid with bricks.

\paragraph{Structure of super-squares.}
A super-square
is a square of bricks of side length $5 \Delta^2 R$ bricks.
We need $150 \Delta^2 R$ to be slightly smaller than $T$.
We might need a few extra rows and columns for divisibility
reasons but for notational simplicity we ignore this difference
and simply assume $T=150 \Delta^2 R$.
Starting in the top left corner in each super-square
we have $\Delta$ mini-squares
along the diagonal.  This leaves $\Delta^2 R$ empty rows
of bricks in the bottom of each super-square and $\Delta^2 R$ empty
columns at the right.
By our placement of the designated survivor and the elimination
of the top row and first column also the top left super-square
looks essentially the same as other super-squares. 

\paragraph{Structure of paths.}
Fix a mini-square $s_i$ in a super-square $S$ and let us see how to
route paths to mini-squares $s_j'$ in the super-square to its right.  For
each pair $(i,j)$ we reserve $R$ columns, $c_{ij}^k, \ 1 \leq k \leq R$, 
of bricks in the right part of $S$.  This can be done has
we have $\Delta^2$ pairs of mini-squares and $\Delta^2 R$
empty columns of bricks.

We number the middle bricks on each side from
one to $2 \Delta R$ and for each $j$ we reserve $R$ bricks, 
all with even number
on the right perimeter of $s_i$.  The $k$th brick contains
three paths that we route straight right to $c_{ij}^k$.
Similarly we route paths straight left from the $k$th brick 
in the $i$th part of $s_j'$ to the same column.  This time
using odd numbered bricks.  The path is completed
by using the suitable part of $c_{ij}^k$ to connect the
two pieces.  

Connecting mini-squares vertically is done completely analogously
and we omit the description.  Let us make a not very difficult observation.

\begin{lemma}\label{lemma:disjoint paths}
In each brick outside the mini-squares there is
at most one set of three vertical paths and at most
one set of three horizontal paths.
\end{lemma}

\begin{proof}
Let us look at the area to the right of the diagonal in one
super-square and to the left of the diagonal in the super-square
to its right.  The only vertical parts of paths in this area
are from the middle segments of the paths.  By definition
these are in disjoint columns.

The horizontal parts of paths in this area  are from either segment one
or segment three.  As the former are in even numbered rows
of bricks and the latter in odd numbered rows these
are also disjoint.

The argument for the area below a diagonal of one super-square and above
the diagonal below, is completely analogous.

The remaining part is the area down to the right in each mini-square, i.e.
in the intersection of columns with no mini-squares and 
rows with no mini-squares.
In these bricks there are no paths.
\end{proof}

We summarize the properties we need from
this construction.  

\begin{lemma}\label{lemma:newmatching}
The values of the path variables $y_P$ uniquely determines
a matching on all paths and in any brick outside
the mini-squares.  In any mini-square except
the designated survivor such that half of its adjacent
$y_P$ variables are true we can find a unique matching
of the remainder of this mini-square.
In the designated survivor we can find a matching
provided one more than half the adjacent
variables is true.
\end{lemma}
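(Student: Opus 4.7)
The plan is to verify the three claims region by region: first the augmenting paths themselves, then the bricks outside all mini-squares, and finally the mini-squares.

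For each augmenting path $P$ there are exactly two canonical ways to pair consecutive edges. The value $y_P=0$ selects the pairing that stays internal to $P$, so neither attachment point is consumed, while $y_P=1$ selects the shifted pairing, which consumes both attachment points and thereby creates a dent in each of the two adjacent mini-squares. Since the paths are pairwise edge-disjoint by construction these local choices never conflict, so the values of the $y_P$'s uniquely determine the matching on the union of all paths.

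For a brick $B$ outside every mini-square I apply Lemma~\ref{lemma:disjoint paths}, which guarantees that at most one horizontal and one vertical path cross $B$. Each such path is straight across $B$; since the brick has even side length $30$, the entry and exit nodes on opposite sides have opposite colors. A type-$0$ crossing contributes no dents to $B$ (its two crossing nodes are matched inside the brick), whereas a type-$1$ crossing abstractly removes its entry and exit nodes, producing two dents of opposite colors. Hence $B$ carries at most four dents, evenly split between the two colors, and by the construction every such dent lies on the interior of a side at distance at least $12$ from any corner, so Lemma~\ref{lemma:matchcenter} produces a matching of the interior of $B$. Fixing once and for all a deterministic rule that converts a dent pattern on a $30\times 30$ brick to an actual matching yields a unique matching of each brick.

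For a non-designated mini-square $s$ with even side length $120\Delta R$, I count dents on the perimeter. On each side the $3\Delta R$ adjacent paths split into $2\Delta R$ varying paths (attached at the color of $s$) and $\Delta R$ fixed type-$1$ paths (attached at the opposite color). The fixed paths deterministically produce $4\Delta R$ opposite-color dents in total, while under the hypothesis that half of the adjacent $y_P$'s are true the varying paths produce $4\Delta R$ same-color dents; the two counts match. All dents are confined to the middle $2\Delta R$ bricks of each side and hence at distance at least $30\Delta R$ from any corner, which easily dominates the total dent count $8\Delta R$, so Lemma~\ref{lemma:matchcenter} yields the matching. The designated survivor has odd side length $120\Delta R+1$ and one excess white node in its interior; requiring one more than half of its adjacent $y_P$'s to be true adds exactly one extra same-color (white) dent, matching the imbalance, and the odd-length case of Lemma~\ref{lemma:matchcenter} applies. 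Uniqueness in both cases follows by fixing a deterministic dent-to-matching procedure.

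The only non-routine step is the dent-color bookkeeping inside the mini-squares: one must explicitly use the $\Delta R$ fixed opposite-color type-$1$ paths per side to see why ``half of the adjacent $y_P$'s true'' is the precise condition that balances the two colors (and why ``one more than half'' is the precise shift needed for the odd-size designated survivor). Everything else reduces to Lemma~\ref{lemma:disjoint paths}, Lemma~\ref{lemma:matchcenter}, and the fixed canonical matching rule.
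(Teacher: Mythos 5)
Your proof is correct and fills in exactly the reasoning the paper leaves implicit: the paper states Lemma~\ref{lemma:newmatching} as a summary of the preceding construction without a separate proof, and the intended justification is precisely the combination of Lemma~\ref{lemma:disjoint paths}, Lemma~\ref{lemma:matchcenter}, a fixed deterministic dent-to-matching rule, and the dent-color bookkeeping via the $R$ fixed type-$1$ paths per group that you spell out. No gap; this matches the paper's approach.
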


\subsection{Almost complete matchings and restrictions}

We let an {\em almost complete matching}, usually denoted $\tau$,
be an assignment to all variables $y_P$ such that exactly half the
variables next to any mini-square are true.  Note that many such
$\tau$ do exist and in particular we can pick half the paths
in any group of $2R$ paths to be of each type.
There are many other ways to pick $\tau$ but we
do not need this explicitly.

Our restrictions are constructed by giving
fixed values to some variables $y_P$ while
some remain unset.  We rename the latter $z_P$ to distinguish
new and old variables.  While an individual $x_e$ might depend on
many $y_P$ it only depends on at most three new variables.

Let us proceed to describe how to pick a random restriction from
our space.  We make uniformly 
random choices but if some choice is very unlucky we redo
this choice.   Let $\pi_1$ be a fixed matching of
all super-squares except the top left square where
the designated survivor is located.  We pick a restriction $\sigma$
as follows.  

\begin{enumerate}

\item Pick a uniformly random almost complete matching $\tau$.
If any group of $2R$ variables between two fixed
mini-squares has fewer than $R/2$ variables
of either value, restart.
We denote this event as ``lopsided group''.

\item Pick a random mini-square from each super-square, except
the super-square of the designated survivor forming, jointly
with the designated survivor, the set $U$.
Match these mini-squares according
to $\pi_1$.  Call the resulting matching (now of
mini-squares) $\pi_1^U$.

\item For each pair of mini-squares $(s_1,s_2)$ matched in $\pi_1^U$
pick one augmenting path of type $1$ and convert it to type $0$.
These are called chosen paths.  The choice is based on
an advice string $B$ as discussed in Section~\ref{sec:change}
below.

\item For each pair of mini-squares $(s_1,s_2)$ in $U$ in
adjacent super-squares but not matched in $\pi_1^U$
pick one augmenting path of type $0$ and make it a chosen path.
The choice of the path is based on the advice string as 
discussed in Section~\ref{sec:change} below.
This is also done with $s_1$ being the designated survivor.

\end{enumerate}

For each chosen path $P$ we have corresponding variable
$z_P$.  These give the variables in the reduced instance.

\subsection{The reduced instance}

The nodes of the new instance are given by the elements
in $U$.
They naturally form a $(n/T) \times (n/T)$ grid.
We have chosen paths that are of type $0$ that connect
any two adjacent elements in $U$ while for any other
path $P$, the value of $y_P$ is now fixed. 

By Lemma~\ref{lemma:newmatching} if exactly one
variable corresponding to a chosen path next to a mini-square in $U$ is true,
then it is possible to find a matching of this mini-square.
Thus the local conditions of a mini-square turn in to
an axiom of the new instance of $\PHP$.
Let us see how to replace the old variables
in a supposed proof with these new variables.
First note that old variables not on edges in chosen mini-squares
or in bricks with at least one chosen path
are now fixed in a way respecting the corresponding
axioms.

We now describe how to restrict the variables of $\PHP_n$
according to a full restriction $\sigma$.
Consider a brick with at least one chosen path
going through.  There is only one chosen path
between any two adjacent chosen mini-squares and it is not difficult to
see that at
most one chosen path goes through any brick.
There are two possible matchings of this brick
depending on the value of the corresponding
variable $z_P$.  If an edge $e$ is present in
neither we replace $x_e$ by $0$ and if it is
present in both we replace it by $1$.
If $e$ is present in only one we replace it
by $z_P$ or $\bar z_P$ in the natural way.
It is easy to see that any axiom
related to a node in the brick becomes
true.

Similarly, in a mini-square we have
four (or fewer if it is on the perimeter) chosen
paths next to it. These are controlled by four
new variables that we here locally call
$z_i$ for $1 \leq i \leq 4$.  The new local
axiom is that exactly one of these four
variables is true.

We have four different matchings of the mini-square
depending on which $z_i$ is chosen to be true.
Look at an edge, $e$ and suppose, for example, that it appears 
in the matchings corresponding to $z_2$ and $z_3$
being true.  In such a case we replace $x_e$ by $z_2 \lor z_3$ and
similarly in other cases.  If $e$ is in none of the
four matchings we replace $x_e$ by the constant $0$ and
if it is in all four we replace it by
the constant $1$.

It is easy to check that any original axiom inside
the mini-square either reduces to true or that
exactly one of the four $z_i$ is true.
Indeed, looking at the disjunctions replacing
the four variables $x_e$ around any node, each $z_i$ appears
in exactly one.  We summarize the discussion of
this section as follows.
\begin{lemma}\label{lemma:reduced}
A restriction $\sigma$ reduces an axiom in the 
$\PHP_n$ either
to the constant true or an axiom in the $\PHP_{(n/T)}$ of
the new variables $z_P$.
Each variable $x_e$ is substituted by a conjunction
of up to three literals.
\end{lemma}

Let us finally note that each variable $x_e$ that is
not replaced by a constant or a single variable
appears inside a chosen mini-square corresponding
to a node, $v$, in the new instance.  It is completely
determined by the answer to the question ``To which node
is $v$ matched'' and hence it is not very different from
other variables.  In particular since it is determined by
one basic query there is no problem with a build-up
of complicated expressions when we compose restrictions.

As in previous papers these full restrictions
are not used in the main argument and we work
with partial restrictions that we now turn to.
The intuitive reason to introduce these 
is that a full restriction is a very rigid object with exactly
one live mini-square in each super-square.  In the final
argument we compare the number of restrictions to the
number of restrictions with slightly less live variables.
This does not work for full restrictions as the number
of full restrictions that have $s$ live mini-squares 
removed is actually higher than the number of full
restrictions.   If we instead first globally pick
a larger number of mini-squares to keep alive,
then decreasing the number of such live mini-squares
does decrease the number of alternatives. Let us
turn to defining these partial restrictions.

\subsection{Partial restrictions}

Let $k$ be a parameter equal to $C (n/T)^2 \log n$
for a sufficiently large constant $C$.  After
we have completed the construction of $\sigma$
we add the following steps.  \label{distribution}

\begin{enumerate}

\item Pick, without replacement of mini-squares, $k$ uniformly
random pairs of mini-squares in adjacent super-squares.  
These are picked one at the time and each is picked uniformly
from the set of remaining possible pairs.
This yields a matching $\pi_2$.  If any super-square   
has more than $4C \log n$ live mini-squares
we consider the choice unbalanced and we restart.
Also, if for any two adjacent super-squares $S_1$ and $S_2$ 
if we have less $C \log n/4$ 
pairs $(s_1,s_2)$ with $s_i \in S_i$ we consider the choice
unbalanced and
restart.

\item Change the type of one augmenting path between
any pairs of nodes in $\pi_2$ from $1$ to $0$.  The choice of which of
the at least $R/2$ paths of type $1$ is based
on the advice string as discussed in Section~\ref{sec:change}
below.

\end{enumerate}

The mini-squares picked during this process jointly with
the chosen mini-square are considered ``alive''.
For any path, $P$ between two live mini-squares the
corresponding value, $y_P$, is now considered undetermined while other
$y_P$ variables are fixed.  When we later go from
a partial restriction to a full restriction it is not true that
these undetermined variables can vary freely as in fact for 
each pair of live mini-square we may change the value of
at most one of its adjacent paths.  In spite of this
we do not consider such a value to be known unless
we have the full information at one of its end points.

As we pick $2k$ mini-squares we
expect roughly\footnote{The reason this is not exactly true
is super-squares at the perimeter have only two or three neighboring
super-squares.  Such squares are less likely to have many live mini-squares.  
This results in a factor $(1+o(1))$ more mini-squares in
other super-squares but this small factor does not matter and
we ignore it.}
$2C \log n$ live mini-squares in any super-square and hence,
by standard Chernoff bounds it is unlikely that this number is
larger than $4C \log n$ for any super-square.
Similarly, for any two adjacent super-squares $S_1$ and $S_2$ we expect
$C \log n /2$ pairs $(s_1,s_2)$ picked such that $s_i \in S_i$
and hence it is unlikely that this number is smaller than 
$C \log n/4$ for any pair $(S_1,S_2)$.

We call the resulting restriction $\rho$.  It determines values
for variables $y_P$ exactly as for $\sigma$.  
A variable $x_e$ is fixed to a constant
unless it as influenced by at  least one live mini-square.  Either from
being within such a mini-square or in a brick with at least one
live path passing through.

\subsection{Changing types of augmenting paths} \label{sec:change}

In the above procedure, in two places we need to select
an augmenting path and (possibly) change its type.
This happens when changing a path from type $1$ to type $0$ because its end-points are matched in $\pi_1^U$
or $\pi_2$ and when opening up for changing the type from
$0$ to $1$ by making it a chosen path.

We could accept to make this choice arbitrary by losing
some factors of $R$ in our bounds, but as it is always
nice to avoid unnecessary loss let us describe a
more efficient choice.

The choices of the $2R$ variables in
a group corresponds to a vector in $\{ 0, 1 \}^{2R}$ and
we want to modify one coordinate in order to change the Hamming
weight from $t$ to either $t+1$ or $t-1$ and let us suppose the latter.
We want the choice to be limited and as invertible
as possible.  As the number of strings of weights $t$ and
$t-1$ are different we cannot achieve perfect
invertability.   Suppose first that $t \leq R$.

\begin{definition}
Suppose $t \leq R$.  A mapping $f$ mapping 
${{2R} \choose t}$ to ${{2R} \choose {t-1}}$ which
maps each set to a subset,
is a $k$-{\em almost bijection} if
it is surjective and $| f^{-1} (x)| \leq k$ for
any $x$.  
\end{definition}

The following below lemma is probably well known but as the
proof is not difficult, we prove it.  It is likely
that 4 can be improved to 3 but this does not matter greatly for
us, as this only affects unspecified constants.

\begin{lemma}\label{lemma:restrictedchoice}
If $R/2 \leq t \leq R$ then there is a
4-almost bijection.
\end{lemma}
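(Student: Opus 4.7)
The plan is to phrase this as a bipartite matching problem and verify the Hall-type feasibility condition via the normalized matching (LYM) property of the Boolean lattice. Let $A=\binom{[2R]}{t}$, $B=\binom{[2R]}{t-1}$, and let $G$ be the bipartite inclusion graph with $S\sim T$ iff $T\subset S$. Producing a $4$-almost bijection amounts to choosing, for every $S\in A$, a neighbor $f(S)\in B$ so that each $T\in B$ is chosen by at most $4$ elements of $A$ and by at least one. I will obtain $f$ by forming the auxiliary bipartite graph $H$ in which every $T\in B$ is replaced by $4$ copies and asking for a matching saturating $A$; by Hall's theorem such a matching exists iff $|N_G(S')|\ge |S'|/4$ for every $S'\subseteq A$.

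The key inequality I need is the normalized matching property
\[
\frac{|N_G(S')|}{\binom{2R}{t-1}}\ \ge\ \frac{|S'|}{\binom{2R}{t}},
\]
which falls out of a short double count using the biregularity of $G$ (left-degree $t$, right-degree $2R-t+1$). Rearranging gives $|N_G(S')|\ge |S'|\cdot t/(2R-t+1)$. In the range $R/2\le t\le R$ the ratio $t/(2R-t+1)$ is at least $(R/2)/(3R/2+1)\ge 1/4$ once $R\ge 2$, which is entirely harmless since $R=\Theta(\log n)$ in the rest of the paper. Hall's condition is therefore satisfied and a map $f$ with $|f^{-1}(T)|\le 4$ for every $T\in B$ exists.

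Surjectivity is the only remaining point. Applying the same normalized matching inequality with the roles of $A$ and $B$ swapped gives $|N_G(T')|\ge |T'|\cdot(2R-t+1)/t\ge |T'|$ for every $T'\subseteq B$; feeding this into a standard flow argument with lower bound $1$ and upper bound $4$ on each $T\in B$ shows that $f$ can in fact be chosen to hit every $T\in B$. The main (mild) subtlety, and essentially the only one, is that the ratio $(2R-t+1)/t$ pushes right up against $3$ when $t=R/2$, which is what forces the constant $4$ rather than $3$ in the statement; this is precisely the looseness alluded to in the remark after the lemma.
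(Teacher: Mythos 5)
Your proof is correct and built from the same two ingredients the paper uses — Hall's theorem applied to the inclusion bigraph between consecutive levels of the Boolean lattice, together with the LYM/normalized matching property — but you package them differently. The paper constructs $f$ explicitly from two matchings: a matching $M_1$ (in the original inclusion graph) saturating all of $\binom{[2R]}{t-1}$, and a matching $M_2$ (in the graph with each $(t-1)$-set tripled) saturating all of $\binom{[2R]}{t}$; then $f$ is ``$M_1$-partner if it exists, else $M_2$-partner,'' and surjectivity comes for free from $M_1$ while the bound $4 = 1+3$ comes from the two matchings. You instead quadruple the $(t-1)$-side and verify Hall's condition directly via normalized matching to get $|f^{-1}(T)|\le 4$, then appeal to a separate flow argument with lower bounds to force surjectivity. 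This works, but the surjectivity step as stated is the softest part of your write-up: a transportation/Gale--Hoffman argument with demands in $[1,4]$ needs a combined Hall-type condition, not just the two one-sided inequalities you record, and when one makes it fully rigorous the natural move is to peel off a matching saturating $B$ first and then a second (at-most-$3$-to-one) assignment of the leftover $A$-vertices — at which point you have essentially reproduced the paper's two-matching construction. So the paper's route buys a cleaner, more self-contained surjectivity argument, while your route makes the dependence on the ratio $t/(2R-t+1)$ (and hence the origin of the constant $4$, and why $t\ge R/2$ is needed) more transparent.
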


\begin{proof}
Consider a bipartite graph where the left
hand side elements are subsets of size $t-1$ and the right
hand side elements are subsets of size $t$.  Connect
two sets iff one is a subset of the other.  It is well
known (see for instance Corollary~2.4 in \cite{bollobas})
that this graph has
a matching, $M_1$, of size $2R \choose t-1$.

Modify the construction by making three copies of 
each left hand side node.  Each copy is again connected
to any set that contains it.  It follows by
the LYM inequality (stated as Theorem~3.3 in \cite{bollobas})
that this graph has
a matching $M_2$ of size $2R \choose t$.

Now define $f(x)$ as follows.  If $x$ is matched
in $M_1$ let it be its partner in this matching.
If $x$ is not matched in $M_1$ define $f(x)$ to be the
partner under $M_2$.

Due to the first condition $f$ is onto.  The property
that $| f^{-1} (y)| \leq 4$ for follows as
a preimage of $y$ is either its partner under $M_1$
or a partner of one of its three copies under $M_2$.
\end{proof}

Taking the complement of both input and output
we define a $4$-almost bijection, $g$ mapping
$2R \choose t$ to $2R \choose {t+1}$ for $R \leq t \leq 3R/2$.
We use $f$ and $g$ to guide our choices and
in addition we have
two bits of advice for each group.

\begin{definition} \label{def:advice}
For each group of $R$ paths we have two bits in the 
advice.
\end{definition}

When we want to convert a path from type $1$ to type
$0$ between $s_i$ and $s_j'$ we look the types of all paths between
the two mini-squares.  This is a vector, $v$, in $\{ 0,1\}^{2R}$
which by the non-lopsidedness has Hamming weight
$t$ which is in the interval $[R/2,3R/2]$.
If $t> R$ we look at $g^{-1}(v)$ and consider the
two bits of advice, $b_1$ and $b_2$.  All we
need to do is to ensure that each choice
in $g^{-1}(v)$ is possible but to be explicit we
can proceed as follows.

\begin{itemize}

\item If $g^{-1}(v)$
is of size one we pick the unique element.  

\item If $g^{-1}(v)$
is of size two we use $b_1$ to make the choice.

\item If $g^{-1}(v)$
is of size three then $b_1=b_2$ we pick
the lexicographically first path and
otherwise we use $b_1$ to choose between
the other two paths.

\item If $g^{-1}(v)$
is of size four then use the pair $(b_1,b_2)$ to
make the choice.

\end{itemize}
The reason for the advice string is to get a pure counting
argument when we later analyze probabilities. It would
have worked to pick a random element from
$g^{-1}(v)$.

To make the situation uniform we have two
advice bits for any pair of mini-squares in adjacent
super-squares.  One can note that most of these bits are never used
but they make the construction uniform.
We let $B$ denote the values of all these bits.

If $t \leq R$ we instead consider $f(v)$ and change the type
of the corresponding path.  Finally if we want
to change the weight from $t$ to $t+1$ we reverse the two 
cases.

\subsection{Analyzing the probability of a restart}

We make a restart either because of a lopsided
group or an unbalanced pick of $\pi_2$ and we analyze
these separately.  We start with lopsided groups.

\begin{lemma}\label{lemma:notfull}
The probability that uniformly random 
$\tau$ has lopsided group is $O(n^2 2^{-cR})$
for a positive constant $c$.
\end{lemma}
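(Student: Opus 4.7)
The strategy is a union bound over groups combined with per-group exponential concentration. The total number of groups of $2R$ varying path variables is $O(n^2/(\Delta^2 R^2))$, certainly at most $O(n^2)$, so it suffices to show that for a fixed group $g = g(s_i, s_j')$ the count $k_g := \sum_{P \in g} y_P$ satisfies $\Pr_\tau[k_g \notin [R/2, 3R/2]] \le 2^{-\Omega(R)}$ under the uniform distribution on the set $\Omega$ of $\tau$'s with $\sum_{P \ni s} y_P = 4\Delta R$ for every mini-square $s$.

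For the per-group estimate I would use a swap-and-count argument. By symmetry among the $4\Delta$ groups adjacent to $s_i$, $E[k_g] = R$. Given any $\tau \in \Omega$ with $k_g(\tau) = t$ and an integer $r \ge 0$, define the set $\Phi_r(\tau)$ of $\tau' \in \Omega$ obtained by flipping $r$ false-valued positions in $g$ to true, and compensating by flipping $r$ true-valued positions in the other groups adjacent to $s_i$ to false, and likewise for $s_j'$. This preserves the ``exactly half are true'' constraint at every mini-square, so $\Phi_r(\tau) \subseteq \Omega$. Counting pairs $(\tau, \tau')$ satisfying $\tau' \in \Phi_r(\tau)$ in two ways yields the identity
\begin{equation*}
\bigl|\{\tau : k_g = t\}\bigr|\binom{2R - t}{r}\binom{4\Delta R - t}{r}^2 \;=\; \bigl|\{\tau : k_g = t+r\}\bigr|\binom{t+r}{r}\binom{(4\Delta - 2)R + t + r}{r}^2.
\end{equation*}
Setting $r = R - t$ so as to compare $k_g = t$ with the mean value $k_g = R$, and applying elementary binomial estimates, the ratio $\Pr[k_g = t]/\Pr[k_g = R]$ should decay like $2^{-c(R-t)}$ for $t < R/2$; summing this geometric tail gives $\Pr[k_g < R/2] \le 2^{-\Omega(R)}$. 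The upper tail $\Pr[k_g > 3R/2]$ follows from the involution $y_P \mapsto 1 - y_P$ on all path variables simultaneously, which preserves $\Omega$ since $4\Delta R$ is exactly half of $8\Delta R$.

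The main obstacle is verifying that the binomial ratio really decays exponentially in $R - t$ and not merely polynomially. This reduces to showing that the ``gain'' factor $\binom{2R - t}{R - t}/\binom{R}{R - t}$, which is of order $4^{R-t}$ away from $t = R/2$ by the central binomial coefficient estimate, dominates the compensating factor $\bigl(\binom{(4\Delta - 1)R}{R - t}/\binom{4\Delta R - t}{R - t}\bigr)^2 \approx (1 - 1/(4\Delta))^{2(R-t)}$, which is much closer to $1$. This is a routine but finicky Stirling-type calculation, and the only place where the specific geometry of the construction (in particular the fact that each mini-square has $4\Delta$ groups, not just one) really enters.
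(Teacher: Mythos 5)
Your approach (a direct combinatorial double-counting/swap argument) is genuinely different from the paper's, which conditions on the global event $Z = 0^d$, factors the probability using independence of $X_g$ and $Z_g'$, and invokes a Fourier-analytic inequality (Lemma~\ref{lemma:max0}, proved via positivity of $\hat f$) to conclude $\Pr[X_g = x \mid Z = 0^d] \le \Pr[X_g = x]/\Pr[X_g = 0]$, reducing to an unconditional binomial tail. Unfortunately, your swap has a fatal gap precisely at the step where the paper's Fourier argument is doing the work.

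The problem is the claim that $\Phi_r(\tau) \subseteq \Omega$. Each compensating flip at $s_i$ sets some $y_P$ from $1$ to $0$ where $P$ connects $s_i$ to a \emph{third} mini-square $s_k \notin \{s_i, s_j'\}$. This restores the count at $s_i$ to $4\Delta R$ but drops the count at $s_k$ to $4\Delta R - 1$, violating the almost-complete-matching constraint there; the same happens around $s_j'$. The path variables form edge constraints on a graph, not an independent family of per-mini-square constraints, so a local swap confined to the neighborhood of $s_i$ and $s_j'$ cannot preserve $\Omega$. As a result $\Phi_r(\tau) \not\subseteq \Omega$, the counting identity does not hold, and the ratio estimate that follows is unjustified. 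Repairing this requires propagating the swap through the graph (e.g.\ flipping along alternating cycles through $s_i$ and $s_j'$), but then the count of valid $\tau'$ no longer factors into clean binomial coefficients, and bounding the cascade amounts to reproving something like Lemma~\ref{lemma:max0}. The Stirling estimate you flag as the ``main obstacle'' is actually routine once a correct identity is in place; the real obstacle is the one above, and it is structural rather than computational.

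Your observations that a union bound over $O(n^2)$ groups suffices and that the upper tail follows from the complementation involution $y_P \mapsto 1 - y_P$ are correct and match the paper's framing, but they do not rescue the per-group estimate.
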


Based on this lemma we fix $R$ to be $C \log n$ for
a sufficiently large constant $C$ such that the probability of 
having a lopsided group is $o(1)$.  Let us
prove Lemma~\ref{lemma:notfull}.

\begin{proof}
The almost complete matching $\tau$ is defined by the 
variables $y_P$ which we in this section choose
to take values $1$ and $-1$.  For each mini-square, sum the variables
on its boundary and for a uniformly random assignment to
all variables, let $Z$ be the vector of all these mini-square sums.
Let us denote the number of mini-squares by $d$ making
$Z$ an integer  vector of length $d$. 
When constructing $\tau$ we are conditioning on the event $Z=0^d$.

Fix any two mini-squares $s_1$ and $s_2$ and let
$g$ be the group of $2R$ variables associated with
paths between $s_1$ and $s_2$.
Let $X_g$ be sum of the variables
in this group.  We want to estimate
the probability that $X_g= x$ where $x$ is
either at most $-R$ or at least $R$.  Let $Z'_g$ be the set of mini-square
sums when the paths between $s_1$ and $s_2$ are
removed.  As these two mini-squares also have other
adjacent augmenting paths this is still a vector of
length $d$.  Let $v_x$ be the
vector of length $d$ that has $x$ at positions $s_1$ and $s_2$
and is otherwise $0$.  We want to estimate
$$
Pr[X_g=x \ |\ Z=0^d]=Pr[X_g=x \land Z=0^d]/Pr[Z=0^d]
$$
which equals
\begin{eqnarray*} 
 Pr[X_g=x \land Z_g'= -v_x]/Pr[Z=0^d]
\end{eqnarray*}
and as the two events are independent this equals
$$
Pr[X_g=x] Pr[Z_g'= -v_x]/Pr[Z=0^d].
$$
We have the following
lemma of which we postpone the proof.

\begin{lemma}\label{lemma:max0}
For any outcome $v \in \mathbb{Z}^d$ we have
$Pr[Z_g'=v] \leq Pr[Z_g'=0^d]$.  
\end{lemma}

In view of the lemma we get the upper bound
$$Pr[X_g=x] Pr[Z_g'= 0^d]/Pr[Z=0^d]
$$
for the probability we want to estimate.
Clearly $Pr[Z=0^d] \geq Pr[X_g=0] Pr[Z_g'= 0^d]$
and substituting this into the equation we get
the upper bound $Pr[X_g=x] / Pr [X_g=0]$.
When $|x| \geq R$, then by standard Chernoff bounds,
this probability is $2^{-cR}$
for some explicit $c$ and since there are at
most $n^2$ pairs of mini-squares the lemma follows.
\end{proof}

Let us prove Lemma~\ref{lemma:max0}.
\begin{proof}[Proof of Lemma~\ref{lemma:max0}]
Let $f(v)$ be the probability that $Z'_g=v$.
As $v$ is the vector sum of contributions of single
$y_P$,  $f(v)$ is a giant
convolution.  To be more precise for each path $P$ between
mini-squares $s_1$ and $s_2$ we have a vector $v_1$ with
a one in positions corresponding to $s_1$ and $s_2$ and
0 in all other positions.  Define a probability distribution
$f_P$ that gives probability $\frac 12$ to each of
$v_1$ and $-v_1$.  The function $f(v)$ is
the convolution of $f_P$ over all paths $P$.
If $\hat f_P$ is the Fourier transform of $f_P$ then
the Fourier transform of $f$ is 
\begin{equation}\label{eq:fourier}
\hat f(x)= \prod_P \hat f_P(x),
\end{equation}
where $x$ belong to the $d$-dimensional torus.
Note that as $f_P$ is symmetric around $0^d$, $\hat f_P$ is
real-valued.  Moreover,  for each pair of neighboring
mini-squares $s_1$ and $s_2$ we have $2R$ paths between
$s_1$ and $s_2$.  This implies that the right-hand side
of (\ref{eq:fourier}) can be written as
$$
\prod_{s_1,s_2} \hat f_{s_1,s_2}^{2R},
$$
where $f_{s_1,s_2}=f_P$ for any path $P$ with endpoints
$s_1$ and $s_2$.   We conclude that $\hat f$ only
takes real and non-negative values.  We conclude that,
as for any function with a real-valued and non-negative Fourier transform,
$f(0)\geq f(v)$ for any $v$ and this is exactly what we wanted to 
prove.
\end{proof}

Let us next discuss the balance condition when
picking $\pi_2$.

\begin{lemma}\label{lemma:unbalance}
The probability that $\pi_2$ is unbalanced
is $O(n^{-2})$ provided $C> C_0$ for some
fixed constant $C_0$.
\end{lemma}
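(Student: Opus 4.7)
The plan is to bound the two failure modes separately and union bound. For each super-square $S$ let $X_S$ denote the number of mini-squares of $S$ that end up picked into $\pi_2$, and for each ordered pair of adjacent super-squares $(S_1,S_2)$ let $Y_{S_1,S_2}$ denote the number of selected pairs in $\pi_2$ with one endpoint in each. The event ``unbalanced'' is contained in the union of the events $\{X_S>4C\log n\}$ over super-squares $S$ together with the events $\{Y_{S_1,S_2}<C\log n/4\}$ over pairs of adjacent super-squares. Since both families have size $O((n/T)^2)\le O(n^2)$, it suffices to show that each individual probability is at most $n^{-\Omega(C)}$ and choose $C_0$ large enough.

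The expectations are routine: the total number of mini-squares is $(n/T)^2\Delta^2$ and we pick $2k=2C(n/T)^2\log n$ of them, so by symmetry each mini-square is chosen with probability $2C\log n/\Delta^2$, giving $E[X_S]\le 2C\log n$ up to the boundary $(1+o(1))$ factor already flagged in the paper. Similarly the total number of candidate adjacent pairs is $\Theta((n/T)^2\Delta^4)$ of which we take $k$, so $E[Y_{S_1,S_2}]\ge (1-o(1))\,C\log n/2$. Thus $4C\log n$ is at least $(1+\Omega(1))E[X_S]$ and $C\log n/4$ is at most $(1-\Omega(1))E[Y_{S_1,S_2}]$, which is exactly the regime where multiplicative Chernoff bounds are useful.

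To get Chernoff-style tails I would argue that the process of sequentially choosing a pair uniformly from those whose endpoints have not yet been used produces a negatively associated family of mini-square indicator variables $\{W_m\}_m$, as is standard for random matching/without-replacement processes. Writing $X_S=\sum_{m\in S}W_m$ and $Y_{S_1,S_2}=\sum_{(m_1,m_2)}W_{m_1}W_{m_2}\mathbf{1}[\text{pair picked}]$ as a sum of $\{0,1\}$-valued negatively associated (or $1$-Lipschitz martingale differences, depending on taste) variables, the standard multiplicative Chernoff bound gives $\Pr[X_S>(1+1)E[X_S]]\le e^{-E[X_S]/3}=n^{-2C/3}$ and $\Pr[Y_{S_1,S_2}<E[Y]/2]\le e^{-E[Y]/8}=n^{-C/16}$. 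Union bounding over the $O(n^2)$ events and choosing $C_0$ so that $\min(2C/3,C/16)\ge 4$ yields the claimed $O(n^{-2})$ bound.

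The one non-routine step, and the place I expect to spend real effort, is justifying Chernoff-type concentration despite the dependence introduced by sampling pairs sequentially from an evolving set of ``still legal'' pairs. The cleanest route is probably a coupling with independent sampling: first pick $k$ pairs with replacement from the initial set of candidate pairs, apply Chernoff directly to those truly independent indicators, and then observe that conditioning on no collisions (which has probability $1-o(1)$ in our regime since $k/\bigl((n/T)^2\Delta^4\bigr)=o(1)$) only affects the bounds by an insignificant factor; alternatively one invokes the negative-association framework for random matchings directly. Either way the bookkeeping for the lower-tail event on $Y_{S_1,S_2}$ is slightly more delicate than the upper-tail event on $X_S$, because there one must verify that the mini-squares available for forming $(S_1,S_2)$-pairs are not themselves depleted by earlier picks; but since the expected depletion is $O(\log n)$ out of $\Delta^2$ mini-squares in each super-square, the correction is $1+o(1)$ and absorbed into the constants.
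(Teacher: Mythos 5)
Your proposal is correct and takes essentially the same route as the paper: compare the without-replacement pair-picking process to the with-replacement process (noting only an $o(1)$ fraction of mini-squares are ever used, so the two differ by $1\pm o(1)$ factors), then apply standard Chernoff bounds and a union bound over the $O(n^2)$ super-squares and adjacent pairs, taking $C_0$ large enough. The paper's proof is only a sketch that explicitly defers the bookkeeping to the reader; your proposal supplies exactly that bookkeeping (explicit expectations, explicit Chernoff exponents, and a coupling/negative-association justification for the dependence), so it is a fleshed-out version of the same argument rather than a different one.
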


\begin{proof}
This might follow
from the fact that Chernoff bounds are true for
negatively correlated variables.  It is, however,
not quite obvious that
our variables are negatively correlated.  It is true that
fixing the size of the matching causes negative correlation
but the requirement to be a matching is more complicated
and could give positive correlations.  In view of this
let us sketch a direct argument.

If we picked the pairs of mini-squares
with replacement the lemma would be completely standard.  
Let us analyze the dynamic process. To see
that we do not pick more than $4C \log n$ mini-squares
in any super-square with high probability we note two
facts.

\begin{itemize}

\item As we only pick an $o(1)$ fraction of all mini-squares,
at each point in time a fraction $(1-o(1))$ of all pairs
are available.

\item In view of this the probability that any
single mini-square is picked is only a $(1+o(1))$
factor larger compared to the procedure 
with replacement.

\end{itemize}
That we are unlikely to pick many mini-squares in a single
super-square now follows from the corresponding result
for the process of picking with replacement.

We turn to the condition that we have at 
least $C \log n /4$ pairs in any two adjacent super-squares.
Also, this analysis is completely standard if edges are picked with
replacement.  If we condition on not picking more than $4C \log n$
mini-squares in any super-square the probability that a picked edge is between
two given super-squares does not decrease by more than a factor
$1-o(1)$.  Hence the probability of picking very few edges
between two given super-squares in the process without replacement 
is not so different compared to the probability of
the same event in the process with replacement.
We leave it to the reader to fill in the details.
\end{proof}

\section{The switching lemma} \label{sec:switch}
In this section we establish the following basic
switching lemma.

\begin{lemma} \label{lemma:switch}
There is a constant $A$ such that the
following holds.
Suppose there is a $t$-evaluation that includes $F_i, 1\leq i \leq m$
in its range and let $F =  \lor_{i=1}^m F_i$.
Let $\sigma$ be a random restriction from the 
space of restrictions defined in Section~\ref{sec:restrictions}.
Then the probability that $F\lceil_\sigma$
cannot be represented by a decision tree of depth at most $2s$
is at most
$$
\Delta (A(\log n)^{3} t \Delta^{-1})^{s}.$$
\end{lemma}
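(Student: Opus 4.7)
The plan is to prove Lemma~\ref{lemma:switch} by an encoding-style switching argument in the Razborov--H{\aa}stad tradition, adapted to our almost-complete-matching restriction space and to the $t$-evaluation framework. The first step is to build the canonical decision tree for $F\lceil_\sigma$: process the $F_i$ in a fixed order; at any internal node, inspect $\varphi(F_i)\lceil_\sigma$; if it is a $0$-tree, mark $F_i$ false along this branch and move to $F_{i+1}$; otherwise fix the leftmost path to a $1$-leaf of $\varphi(F_i)\lceil_\sigma$ and query the (at most $t$) node-match questions along it one at a time. On the unique answer sequence agreeing with that path the leaf is labelled $1$; on any divergence, recurse with $F_{i+1}$ under the augmented restriction. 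Lemma~\ref{lemma:extend} guarantees each query can be continued locally consistently, so the construction produces a bona fide decision tree for $F\lceil_\sigma$, whose depth is what we must bound.

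Now suppose this tree has a branch $\pi$ of length exceeding $2s$. I would split $\pi$ into blocks, one per $F_i$ contributing queries: a block consists of some agreement-queries matching the leftmost $1$-path in $\varphi(F_{i_j})\lceil_\sigma$ followed by one deviation (or by exhausting the path and jumping to $F_{i_{j+1}}$). Peeling off at most two new queries per block, at least $s$ blocks must each contribute ``signal'' to encode.

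The heart of the proof is a counting encoding. From a bad pair $(\sigma,\pi)$ I would extract an extended restriction $\sigma^{\star}$ in which the mini-squares touched by $\pi$ have been made alive and the corresponding augmenting paths re-typed using the rules of Section~\ref{sec:restrictions}, together with a short advice string per block: the index $i_j$, the position along the depth-$t$ path at which the deviation occurred, the identity of the mini-square that woke up inside its super-square (one of $\Delta$), which of the $O(R)=O(\log n)$ paths in the relevant group of $2R$ was flipped, and the $O(1)$ bits of Definition~\ref{def:advice}. The map $(\sigma,\pi)\mapsto(\sigma^{\star},\text{advice})$ is an injection, so comparing probabilities block-by-block yields a ratio of at most $A(\log n)^{3} t/\Delta$ per block: the $1/\Delta$ comes from fixing which mini-square is alive, the $t$ from the choice of position along the $\varphi(F_{i_j})$-tree, and $(\log n)^3$ absorbs the $R$ path choices within a group, the constant number of advice bits, and an overall slack accommodating the non-lopsidedness and balance conditionings. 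Summing over $s$ blocks gives $(A(\log n)^3 t/\Delta)^s$, and the leading $\Delta$ is the overhead for the very first block, where no mini-square has yet been committed.

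The main obstacle I expect is verifying injectivity in the constrained restriction space. Our restrictions are not a product distribution: the $y_P$'s obey the global ``half-on, half-off'' condition at each mini-square, and we condition on no lopsided group and on balance of $\pi_2$. Moreover the almost-bijections $f,g$ of Lemma~\ref{lemma:restrictedchoice} are only $4$-to-$1$, so one must argue that $(\sigma,\pi)$ is reconstructed from $(\sigma^{\star},\text{advice})$ within the allowed space. I would lean on the Fourier/convolution argument behind Lemma~\ref{lemma:max0} to convert the local counting into global probability ratios: that lemma asserts precisely that the all-zero outcome is the most probable, which is exactly what is needed to bound the per-block cost. The substitution of some variables by disjunctions of up to three literals from Lemma~\ref{lemma:reduced} does not cause further trouble, since each such variable is still determined by a single node-match query, so the depth-$2s$ accounting is unaffected up to an absolute constant absorbed into $A$.
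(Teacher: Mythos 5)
Your high-level plan is the right one and matches the paper's structure (canonical decision tree, Razborov encoding, external advice, probability ratio, reliance on Lemma~\ref{lemma:max0} to handle the conditioned distribution), but several of the specific mechanisms are either reversed or unattributed, and these are exactly the places where the proof lives or dies.

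First, the direction of the modification $\sigma \mapsto \sigma^\star$ is backwards. You say the mini-squares touched by $\pi$ are ``made alive'' in $\sigma^\star$. In the paper's encoding the matched pairs in $J^*$ are made \emph{dead}: the chosen augmenting path between them has its type fixed once and for all, so $\sigma^\star$ has \emph{fewer} live mini-squares, $\pi_2^\star$ has fewer edges, and $U^\star$ may lose chosen mini-squares. This is what makes $\Pr[\sigma^\star]$ larger than $\Pr[\sigma]$ and the injection useful. The way you have written it, $\sigma^\star$ would carry more degrees of freedom, not fewer, and no probability gain would materialize.

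Second, the origin of the per-block factor $t(\log n)^3/\Delta$ is misassigned. You credit $1/\Delta$ to ``fixing which mini-square is alive'', but that is a \emph{cost} (finding the second endpoint of a pair costs a factor $\Delta$ of external information). The actual \emph{gain} of order $\Delta^2/\log n$ per removed pair comes from comparing $\Pr[\pi_2]$ to $\Pr[\pi_2^\star]$ (the paper's Lemma~\ref{lemma:pi2}, which does a careful ordered-sampling argument because $\pi_2$ is picked without replacement and conditioned on balance). The net $1/\Delta$ is the gain of $\Delta^2/\log n$ fighting the $\Delta$ cost and some $(\log n)$ losses. Your sketch never isolates the probability-gain step, and without something like Lemma~\ref{lemma:pi2} the ratio does not close. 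Similarly, the leading factor $\Delta$ in the bound is not ``overhead for the very first block''; it arises because $U$ must contain the designated survivor while $U^\star$ may not, so the number of possible $U^\star$ exceeds the number of possible $U$ by a factor $\Delta$.

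Third, the reconstruction is underspecified in a way that matters. After killing the mini-squares in $J^*$, the decoder may see earlier branches in $\varphi(F_i)$ that are spuriously forced to $1$ by $\rho^\star$ even though they were never forceable under $\rho$ (the forcing information may come from several $J_i$ whose union is not locally consistent). The paper handles this with signatures and the bookkeeping set $E$; without that mechanism the decoder cannot identify the correct forceable branch and the injection breaks. You also omit the ``holes'' case (a super-square losing all its $U$-representatives), which needs a separate count of $\tau^\star$ via Lemma~\ref{lemma:max0} and the observation that holes only occur when $\Omega(\log n)$ edges between two super-squares lie in $J^*$, so that $r = O(s/\log n)$ and the $n^{O(r)}$ overhead is absorbed into $2^{O(s)}$. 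These are not cosmetic omissions; without them the counting is not valid.
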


\begin{remark}
A good way to think about the parameters when applying 
Lemma~\ref{lemma:switch} is that $R$ is given
by the choice of $n$.  We then choose $\Delta$ to
be sufficiently large to get a small failure probability.
The value of $T$ that controls how much the instance
shrinks is then calculated as $T= 150 \Delta^2 R$,
and this also determines the value of $k$.
\end{remark}

\begin{proof}
We are interested
in a $\sigma$ that gives a long path in the decision tree.
As in previous papers \cite{jhtseitin}, \cite{jhkr}
we explore what has been called ``the canonical decision tree''
 under the partial  restriction $\rho$ which
we from now on call simply ``a restriction'' dropping
the word ``partial''.  As $\sigma$ has fewer live
variables compared to $\rho$ this is sufficient to
establish the lemma.

For any variable $x_e$ we define its {\em influential mini-square(s)}.
This is either a single mini-squares or two mini-squares.
We want the property that if we know the values of all $y_P$
around the influential mini-square(s) then this uniquely
determines the value of $x_e$.
If $e$ is within a mini-square then this mini-square
is its influential mini-square(s).  If $e$ is in a brick outside
the mini-squares then the influential mini-squares are the
closest end point(s) of the live path(s) in this brick.
By ``closest'' we here mean a live mini-square such that
if we know the value of all adjacent variables $y_P$ then
we know the value of the $x_e$.

\begin{definition}
The value of $x_e$ is {\em forced} iff the values
of $y_P$ is known for all paths connected
to its influential mini-square(s).
\end{definition}

If we are interested in the value of a variable $x_e$ that
is not forced, we need to find out more information to
change this state of affairs.  The main way to get such information is
by something we call matched pairs.

\begin{definition}
If two mini-squares $s_1$ and $s_2$ give a {\em matched pair} 
then we should change the value of $y_P$ from $0$ to
$1$ where $P$ is a path with end-points
$s_1$ and $s_2$.
\end{definition}

\begin{remark}
We use this concept for pairs of neighboring chosen mini-squares
and for pairs in the matching $\pi_2$.  In both cases
the identity of the path to change follows from the context.
For pairs of chosen mini-squares it is the chosen path
and for pairs in $\pi_2$ it is the path the was modified
when it was first picked.
Note that a matched pair $(s_1,s_2)$
fixes, once and for all, the values of all variables $y_{P'}$,
 where $P'$ is a path with an end-point $s_1$ or $s_2$
\end{remark}

We now proceed to define the
canonical decision tree.  Remember that while our starting trees
have variables corresponding to edges in the grid, the tree we 
are creating has variables corresponding to the chosen paths.

The process of creating the canonical decision tree
is guided by $\rho$
and a set $I$ of matched pairs.  The support of the set
$I$ is the mini-squares appearing in any matched
pair and this is initially empty.

Letting $T_i$ denote $\varphi(F_i)$,
we go over the branches of $T_i$ for increasing
values of $i$.  As the values to some variables
are forced by the current information we cannot freely
follow any branch.  We locate the first (in any fixed order)
branch that leads to one and which can be followed
by the current information and call it the {\em forceable} branch.
Let us be formal.

Before stage $j$ we have an information set $I^{j}$ in the form of
some matched pairs.  It contains
some pairs from $\pi_2$ and some pairs based on answers in the decision
tree.  In stage $j$ we have forcing information $J_j$ that forces
the values of all variables on the forceable branch leading
to a one.  This set contains.

\begin{enumerate}

\item A set of matched pairs from $\pi_2$.

\item A set of matched pairs of chosen mini-squares, consistent
with the information set $I^j$.

\end{enumerate}
By ``consistent'' we mean that the resulting partial
matching on chosen mini-squares is locally consistent in the 
sense of Definition~\ref{def:consistent}.

We now extend the canonical decision tree
by, for any chosen mini-square mentioned in $J_j$ we ask
for its partner in the decision tree.  This
information, jointly with the matched pairs from $\pi_2$
in $J_j$, forms the $j$th information set, $I_{j}$, and
we set $I^{j+1}= I^j \cup I_j$.

Given $I^{j+1}$ we can determine whether the forceable
branch is followed.  If it is, we answer $1$ in the canonical
decision tree and halt the process.  Otherwise, we go to the next stage
and look for the next forceable branch.  If there is no more forceable
branch we halt with answer $0$.  Let $T$ be the resulting decision
tree.  To see that this is an acceptable choice for $\varphi (F)$,
we have a pair of lemmas.

\begin{lemma}
Let $\gamma$ be a set of answers in the decision
tree.  If there is no forceable branch given this
information, then, for each $i$,
$T_i \lceil_{\sigma \gamma}$ is a $0$-tree.
\end{lemma}

\begin{proof}
Suppose there is a locally consistent branch in $T_i \lceil_{\sigma \gamma}$ 
that leads to a leaf labeled one.  
The information
used to follow this branch can be used as forcing
information.
\end{proof}

\begin{lemma}
Let $\gamma$ be a set of answers in the decision
tree.  If we answer $1$ then there is an $i$ such that
$T_i \lceil_{\sigma \gamma}$ is a $1$-tree.
\end{lemma}

\begin{proof}
In fact for $T_i$ used for the construction of 
the forceable branch we have now reached a leaf that is labeled $1$.
\end{proof}

We use Razborov's labeling argument \cite{Razborov1995}
to analyze the probability that we make $2s$ queries in the 
canonical decision tree.  Slightly oversimplifying, given
a $\rho$ that gives a long path in the canonical decision
tree, we create a $\rho^*$ with fewer live centers such
that we can recover $\rho$ from $\rho^*$ and some
limited size external information.

Take any branch of this length
and suppose it was constructed during $g$ stages
using information sets $J_j$.  
Let $J^*=\cup_{j=1}^g J_j$ and as any query is a result of including
an element in $J^*$ we know that it contains 
at least $s$ pairs and for notational
convenience we assume that this number is exactly $s$.
As the estimate for the probability of getting a set $J^*$
of size $s+d$ is exponentially decreasing in $d$ we can
sum over all values of $d$, formally justifying this assumption.

Note that $J^*$ may not be locally consistent when seen as
a partial matching on the chosen nodes, but this is not required.
We proceed to analyze the probability that the process
results in a $J^*$ of size $s$.  Let us start with
an easy observation.

\begin{lemma}\label{lemma:disjoint}
The support sets of $J_j$ are disjoint.  The support of
$J_j$ is also disjoint with the support of $I_i$ as
long as $i\not= j$.
\end{lemma}

\begin{proof}
The parts coming from $\pi_2$ are clearly disjoint
as $\pi_2$ is a matching.  The pairs of chosen nodes
are also disjoint as any mini-square included in $J_j$
is included in $I_j$ and later $J_{j'}$ are disjoint
from $I_j$.
\end{proof}

In the spirit of \cite{jhtseitin} and \cite{jhkr} we want to find a restriction
$\rho^*$ with fewer live mini-squares and then show how to
reconstruct $\rho$ from $\rho^*$ and some external information.
First note that it is possible for several tuples $(\tau, U, \pi_2, B)$
to produce the same $\rho$ and hence to make counting unambiguous we
count such 4-tuples and not partial restrictions.  In the same vein
we do not reconstruct it from $\rho^*$ but rather from a
quadruple $(\tau^*, U^*, \pi_2^*, B^*)$ (and some external information).
The quadruple 
$(\tau^*, U^*, \pi_2^*, B^*)$ does produce a restriction $\rho^*$ but
that is not what we count.  As the value of $\pi_1$ is fixed,
$\pi_1^U$ is determined by $U$ and hence we do not need to include
$\pi_1$ in the information.

Informally we 
want $\rho^*$ to be $\rho$ modified by the information set $J^*$.
In other words for a matched pair in $J^*$ we change the type of one augmenting
path between the two mini-squares and now consider the two
endpoints to be dead.  We proceed to find a quadruple that
gives this restriction.  The main crux is to construct $\pi_2^*$
and $U^*$ and let us start with this.

We initialize $\pi_2^*$ to  $\pi_2$ with all matched pairs in $J^*$ 
removed and $U^*=U$.  
We process the matching pairs, $(s_1,s_2)$ of chosen nodes
in $J^*$ one by one in the order as they appear in $J_j$.
If there is a pair $(s_1', s_2')$
in $\pi_2^*$ where $s_i'$ is in the same super-square as $s_i$ we remove
this pair from $\pi_2^*$ and replace $s_i$ by $s_i'$ in $U^*$ for $i=1$ and
$2$.  If there is no such edge we allow for two ``holes'' of
two empty super-squares in $U^*$ as we still remove $s_1$ and $s_2$.

Next we construct $\tau^*$.  There is not any real choice
how to do this as we want $\rho^*$ to look like $\rho$ with the
additional information given by $J^*$ but let us go over the 
details.   For any square already dead in $\rho$ nothing changes
and all information stays the same.  For the live nodes
there are a number of cases.  One important property
to keep in mind is that for an alive but not chosen
node the pairings are the same in $\pi_2$, $J^*$ and their
$I_j$ sets. When it comes to the chosen nodes this is not the case
and they can be matched to different nodes in $\pi_1^U$,
$J^*$ and their $I_j$ sets.
Let us consider a mini-square $s$ in super-square $S$.
We have a few alternatives.

\begin{enumerate}

\item The mini-square $s$ is contained in a matched pair from $\pi_2 \cap J^*$. \label{case:p2in}

\item The mini-square $s$ is contained in a matched pair from $\pi_2$ and 
moved to $U^*$. \label{case:p2out}

\item The mini-square $s$ is contained in a matched pair from $\pi_2$ and 
is not moved to $U^*$ and does not belong to $J^*$. \label{case:p2outnot}

\item The mini-square $s$ contained in $U$ and neither it nor its
partner in $\pi_1^U$ takes part in the above process. \label{case:p1not}

\item The mini-square $s$ is contained in $U$ and not in a matched
pair in $J^*$, but where its partner in $\pi_1^U$ is contained in 
a pair in $J^*$. \label{case:p1notbutp}

\item The mini-square $s$ is contained in $U$ and is matched in $J^*$.
\label{case:p1yes}

\end{enumerate}

The values of $\tau^*$ and $\tau$ are the same at most points
and we use ``changing the value'' to indicate the value
is different in the two mappings.

\paragraph{Case \ref{case:p2in}.}  We have that $s$ is now dead in $\rho^*$.
We change one value by restoring the value when $s$ was
introduced to $\pi_2$.

\noindent
{\bf Case \ref{case:p2out}.}  First note that as the pairs
in $J^*$ are disjoint, $s$ remains in $U^*$ for the duration
of the process.  If its partner
from $\pi_2$ is in the super-squares which is the
partner of $S$ in the matching $\pi_1$
then $\tau^*$ equals $\tau$ around $s$.  If it is not then
we change the value at one or two position.
If $s''$ is the partner in $\pi_2$ restore the value of
the variable of the path between $s$ and $s''$ when
$(s,s'')$ was put into $\pi_2$.   If $s$ is in the super-square
of the designated survivor do nothing more and otherwise consider the
super-square, $S'$, that is the partner of $S$
under $\pi_1$.  If there is a node $s' \in S'$ in $U^*$
change the value of one path variable between $s$ and $s'$
from 1 to 0.  We choose one variable that is possible
through our selection function.

\paragraph{Case \ref{case:p2outnot}.}  This is simple as $\tau^*$
equals $\tau$ around $s$.

\paragraph{Case \ref{case:p1not}.}  Also in this case $\tau^*$
equals $\tau$ around $s$.

\paragraph{Case \ref{case:p1notbutp}.}  Suppose $s''$
is the partner in $\pi_1^U$ that disappeared.
We restore the value of the chosen path
between $s''$ and $s$ to 1.  Let $S'$ be
the partner of $S$ under $\pi_1$.  
If there is a node $s' \in S'$ in $U^*$
change the value of one path variable between $s$ and $s'$
from 1 to 0.  We choose one variable that is possible
through our selection function.

\paragraph{Case \ref{case:p1yes}.}
In the case $s$ is now dead in $\rho^*$. We restore
the value on the path used when constructing $\pi_1^U$.

This completes the description of $\tau^*$.  Note that
it might not be exactly an almost complete matching as, due
to the possibility of holes in $U^*$ and the possibility that
the element in the super-square of the designated survivor
is no more the designated survivor.
This causes an unbalance of one at some mini-squares,
but this deviation is completely determined by $U^*$.

The value of $B^*$  plays little role.  We simply set it
to equal $B$.  We might have to change it for 
pairs of mini-squares participating in the above
process, but these are few and this can be handled
by the external information.

We now proceed to define a process that, using external information,
reconstructs the tuple $(\tau, U, \pi_2, B)$ from
$(\tau^*, U^*, \pi_2^*, B^*)$ and the trees $T_i$.
The ``star quadruple'' determines the values of
$y_P$ in the same way as the non-stared
quadruple determined values for $y_P$.  To be more
precise the process is the following.

\begin{itemize}

\item Start with values given by $\tau^*$.

\item Define chosen mini-squares by $U^*$.

\item For any pair of mini-squares in $\pi_1^{U^*}$ 
find one path connecting this pair as
indicated by $B^*$.  Change the corresponding value
from $1$ to $0$.

\item For any pair in $\pi_2^*$ find one
path using $B^*$ and change the value from $1$ to
$0$.

\item Let the live mini-squares be the elements of
$U^*$ and the mini-squares of $\pi_2^*$.

\item Consider a value $y_P$ to be fixed unless if
it is between two live mini-squares.

\end{itemize}

Call this restriction $\rho^*$.  Let us establish
a simple lemma for warm-up just to give an indication
of how the proof proceeds.

\begin{lemma}
The restriction $\rho^*$ forces the first forceable branch
to be followed.  
\end{lemma}

\begin{proof}
We claim that any variable $y_p$ forced by $\rho$ is also forced by $\rho^*$
and to the same value.
This follows as any such variable requires all
its influential mini-squares to be dead.  Any mini-square
that this is dead in $\rho$ is also dead in $\rho^*$.  As the 
situation around it has not changed it is forced to the same
value.

The additional information needed to follow the first forceable branch
is given by the matching pairs in $J_1$.  As this information is included in
$\rho^*$ the first forceable branch is followed.
\end{proof}

To correctly identify the first forceable branch is an important 
step in the reconstruction.  Unfortunately there might be earlier
branches forced to one by $\rho^*$.  The reason is that the
information that makes us follow such a branch might come
from several different $J_i$ and there is no guarantee
that these matchings are locally consistent and hence
it might not constitute an acceptable set $J_1$.  

The optimistic view is that any such branch still would help
us as it must point to some live mini-squares.  We are,
however, not able to use this and we need to
make sure that the reconstruction is not confused.
We introduce the concept of ``signature'' to enable
us to correctly identify the first forceable branch.

\begin{definition}
The {\em signature} of a live mini-square determines whether
it is chosen and in such a case which direction
it is matched in its forceable branch.  It is given by
three bits.
\end{definition}

Let us turn to the reconstruction process. 
During this process we specify the needed external information
and in each situation we give the number of bits needed.
We sum up the total size of the external information
at the end.

We start with $\rho^*$ and we reconstruct
$I_j$ and $J_j$ in order.  We let $\rho_j^*$ be
the restriction obtained from $\rho$ jointly
with $I_i$ for $i < j$ and $J_i$ for $i \geq j$.
In particular, $\rho^*_1$ is simply $\rho^*$ which
is the starting point.
We have a set $E$ of prematurely found chosen mini-squares
jointly with their signatures.  This is initially the
empty set.  We proceed as follows.

\begin{enumerate}

\item Find the next branch forced to one in any
of the $T_i$ by $\rho_j^*$.

\item Find, if any, mini-square of $E$ whose information
is used to follow this path.  If this information is
not locally consistent with $I^{j-1}$, go
to the next branch.

\item Read a bit to determine whether there are more
live variables to be found on the current branch.
In such a case, read a number in $[t]$ to
determine its position.  If this variable has two
influential mini-squares, read another two bits
to determine which of these two mini-squares are alive.
For any alive influential mini-square, retrieve the corresponding 
signature(s) from external information
and, if chosen, include the mini-squares in $E$.
Go to step 2.

\item Reconstruct $I_j$ and $J_j$.  Details below.

\item Remove any chosen mini-square included in $J_j$ from $E$.

\end{enumerate}
We need a lemma.
\begin{lemma}
If the mini-squares of $E$ together with their signatures
on a branch leading to a one is consistent with the information
in $I^{j-1}$, then it is a forceable branch.  In particular the
first such branch after the $j-1$th forceable
branch is the $j$th forceable branch.
\end{lemma}

\begin{proof}
Suppose $v \in E$, which by definition, is a chosen 
mini-square, and was included in $J_{j'}$ for some $j' \geq j$.
As the current path is forced to one it must have been
a potential forceable branch.  If it was not the actual
forceable branch it must be that  at least one
of the matched pairs needed to follow this branch is not allowed.
Forcing information from $\pi_2$ is always allowed and
thus the only possible problem is the consistency on
the chosen mini-squares.  If the signatures of the
mini-squares used on the current branch do not give any conflict 
with $I^{j-1}$,  it was allowed and hence it
must be a forceable branch.
\end{proof}

Whenever we recover a pair of adjacent mini-squares
we use external information to recover the advice
bits.  This only costs 2 bits and thus we below
focus on identifying mini-squares and do not mention
the reconstruction of advice bits.  

We need to discuss how to reconstruct $I_j$ and $J_j$
and start with the latter.  For each matched
pair in $J_j$ we have recovered at least one end-point
from the forceable branch.
If needed we use external information to recover 
the other end-point.  This costs at most $1+\log \Delta$ bits.
This recovers $J_j$ and we need to extend it to $I_j$.

For each element in $J_j$ we need to discover whether it is chosen (this 
is one bit of external information) and to which node it is matched 
in $I_j$.  If it is not chosen then the information in $I_j$
is the same as the one in $J_j$ (and also in $\pi_2$).  
If it is chosen then the partner is either 
in $J_j$ and easy to find or it is live in $\rho^*_j$.
This follows as if it is chosen and belongs to $I_j$ but not $J_j$
then, by Lemma~\ref{lemma:disjoint} it cannot belong to $J^*$.
In the case when it is alive we can, at cost $O(1)+ \log \log n$ bits, find 
its identity.  Once we have identified the
two partners in a matching, at cost $O(1)$ we
can reconstruct which augmenting path was used.  This reconstructs
all of $I_j$.

We reconstruct $I_j$ and $J_j$ and
compute $\rho_{j+1}^*$ and proceed to the next
stage.  At the end of this process we have reconstructed
$\rho$ and in the process we have also identified
all the sets $I_j$ and $J_j$ and we know which pairs
in $J_j$ are chosen and which belong to $\pi_2$.  For the 
pairs of chosen mini-squares, we put them back in to $U$ and
any replacements are moved back to $\pi_2$.  At cost
$O(1)$ we can identify the advice bits of any single
mini-square involved in a move.  This way we recover $U$,
the full $\pi_2$, and $B$.  Once we have these
we can read off the original $\tau$.

Let us calculate how much information
was used.  We have the following contributions.

\begin{itemize}

\item For many mini-square processed we need to
read the signature.  As we only process $O(s)$
mini-squares this is $O(s)$ bits.

\item For the chosen nodes, once we have
established their identities we might need $O(1)$
bits to determine in which direction they are matched.
Note that this can be different in $J_j$ and $I_j$.
This costs at most $O(s)$ bits.

\item For potentially forceable  branches
we read one bit to determine if there is any
additional live mini-square to be found through this branch.
As there are only at most $s$ forceable branches
the answer can be ``no'' at most $s$ times.
As we only discover at most $O(s)$ mini-squares to
put in to $E$, the answer can only by ``yes'' at most $O(s)$ times.
Thus the total number of such bits read is $O(s)$.

\item  For each edge in $J^*$ at least one
end-point is discovered at cost at most $1+\log t$ bits and
the other at cost at most $1+\log \Delta$ bits.

\item For each  mini-square in $I_j$ that need
be discovered since it was not a member of $J_j$
we might need $\log \log n+O(1)$ bits.

\end{itemize}
Summing up the information used we first have
a cost $O(s)$ in several (but constant number) 
places.  The main cost is given by the construction of the
pairs in $J^*$ and this is $s(\log t + \log \Delta)$.
For each mini-square in $J^*$ we might reconstruct
its partner in $I_j$ at an additional total cost 
at most $2s \log \log n$.

We proceed to compare the number of quadruples 
$(\tau, U, \pi_2, B)$ to the number of quadruples
$(\tau^*, U^*, \pi_2^*, B^*)$, and let us first assume that there
are no holes.  Both $U$ and 
$U^*$ contain one mini-square from each super-square but
there is a difference that $U$ contains the designated
survivor while $U^*$ might contain a different mini-square
from this super-square.  Thus the number of possibilities
for $U^*$ is a factor $\Delta$ larger than the
number of possibilities for $U$.

Both $\tau$ and $\tau^*$ are essentially almost complete matchings.
We do not allow $\tau$ to be lopsided but this only reduces
the number of possibilities, by Lemma~\ref{lemma:notfull},
by a factor $1+o(1)$.  It is the case that $\tau^*$
is only slightly more general but we only need
an upper bound on the number of possibilities.
On the other hand, if $U^*$ does not contain the designated
survivor the
sum of all adjacent $y_P$ (in $\pm 1$-notation)
around this mini-square is not zero.  
The number of $\tau^*$ with
these fixed sums is, however, smaller by Lemma~\ref{lemma:max0}.
We conclude that the number of $\tau^*$ is bounded by
$(1+o(1))$ times the number of different $\tau$.

There is no difference between $B$ and $B^*$ and
thus the number of alternatives is the same.
We conclude that if we denote the number of different 
$(\tau, U,B)$ by $N$, then the number of triples
$(\tau^*, U^*,B^*)$ is bounded by $\Delta N (1+o(1))$.

The big difference is in the probability of picking the restriction
 $\pi_2$ and $\pi_2^*$.  The reason is that $\pi_2$ contains $k$
pairs and $\pi_2^*$ only $k-s$ pairs.
It is true that $\pi_2$ is also not unbalanced,
but as this, by Lemma~\ref{lemma:unbalance},
only changes the number by a factor $(1+o(1))$ we ignore
this here and absorb this factor in the error term.   We do have
that both $\pi_2$ and $\pi_2^*$
are matchings and thus we have to be slightly
careful as the number of ways to extend a matching
by one more pair depends on the current matching.
In order to do this we put a probability distribution
also on $\pi_2^*$.   The random process to pick it is
as the one picking $\pi_2$ (described on page \pageref{distribution})
but only with fewer pairs picked and we ignore the balance
conditions.

Let $D$ denote the total number of
possible pairs of mini-squares.  We have that $D= \Delta^2 (n/T)^2 (2+o(1))$.
This follow as there are $\Delta (n/T)^2$ mini-squares. Most of
them (except at the perimeter) have $4 \Delta$ possible partners
and hence the number of pairs is $(1+o(1))\Delta (n/T)^2 4 \Delta/2$.
We have the following lemma.

\begin{lemma}\label{lemma:pi2}
Suppose $\pi_2$ is of size $k$ and $\pi_2^* \subset \pi_2$
is of size $k-s$ and assume that $\Delta= \omega (\log n)$.
Then $Pr [\pi_2] \leq 2^{o(s)}(k/D)^s Pr[\pi_2^*]$
which in its turn is bounded by $2^{O(s)}(\log n/\Delta^2)^s Pr[\pi_2^*]$.
\end{lemma}

\begin{proof}
The way we define the probability space of partial matchings
is slightly unusual and we need to be careful.
There are $k!$ different ways to pick $\pi_2$.
The probability of picking $\pi_2$ in a particular
order $\gamma$ is $\prod_{i=1}^k t_{i,\gamma}^{-1}$ where $t_{i,\gamma}$ 
is the number of pairs of mini-squares available at stage $i$
assuming the order $\gamma$.  As $\Delta = \omega (\log n)$
we have that $k$ is $o( \Delta (n/T)^2)$.  In other words very few of the
mini-squares are picked at any stage of the process
and hence $t_{i,\gamma}$ is on the
form $(1-o(1))D$ but we mostly use much more detailed bounds below.

Similarly, the probability of $\pi_2^*$ can be written as
$$
\sum_{\gamma^*} \prod_{i^*=1}^{k-s} t_{i^*, \gamma^*}^{-1},
$$
where $\gamma^*$ determines the order in which its $k-s$
pairs are added.  
Each picked pair of mini-squares eliminates at most $8 \Delta$ possible
other pairs and hence each $t_{i,\gamma}$ and $t_{i^*,\gamma^*}$ is
at least  $D-8k\Delta$.
The product giving the probabilities do have
many factors so we need more precise information.

Say that $\gamma$ and $\gamma^*$ are associated (written
as $\gamma \sim \gamma^*$) if the $k-s$ common elements
are in the same order.  In such
a situation for each $i^*$ there is an $i$ corresponding
to the same element such that $t_{i^*,\gamma^*} \geq t_{i,\gamma}
 \geq t_{i^*, \gamma^*} -8s\Delta$.
Restricting the product to these paired
indices and using $t_{i,\gamma} \geq D/2$ we have
$$
\prod_{i} t_{i,\gamma} \geq (1-16s \Delta /D)^k \prod_{i^*} t_{i^*,\gamma^*}.
$$
As $k = \Theta( \log n D /\Delta^2)$ we have that
$$(1-16s \Delta /D)^k = 2^{-\Theta ( (s\log n)/\Delta)} = 2^{-o(s)},$$
given the assumption on $\Delta$.
As each factor omitted is at least $D-8k \Delta$ we 
can conclude that 
$$
\prod_{i} t_{i,\gamma} \geq D^s 2^{-o(s)} \prod_{i^*} t_{i^*,\gamma^*}.
$$
There are $k! /(k-s)! \leq k^s $  different $\gamma$ associated
with each $\gamma^*$.  We conclude that
\begin{eqnarray*}
Pr[\pi_2] & =& \sum_{\gamma} \prod_{i=1}^k t_{i,\gamma}^{-1}
= \sum_{\gamma^*} \sum_{\gamma \sim \gamma^*}
\prod_{i=1}^k t_{i,\gamma}^{-1} \leq 
\sum_{\gamma^*} \sum_{\gamma \sim \gamma^*}
D^{-s}  2^{o(s)} \prod_{i^*=1}^{k-s} (t^*)_{i^*,\gamma^*}^{-1} \leq \\
&\leq & \sum_{\gamma^*} 
(k/D)^{s}  2^{o(s)}
\prod_{i=1}^{k-s} (t)_{i^*,\gamma^*}^{-1} \leq 
(k/D)^{s}  2^{o(s)} Pr[\pi_2^*],
\end{eqnarray*}
giving the first bound of the lemma.  The second bound is obtained
using the definitions of $k$ and $D$.
\end{proof}

Let us finish the proof of Lemma~\ref{lemma:switch}.
We can produce a random
tuple $(U, \tau, B, \pi_2)$ that forces a branch
of length at least $s$ in the following way.

\begin{itemize}

\item Pick a random $(U^*, \tau^*, B^*, \pi^*_2)$.

\item Pick external information and use the above
inversion process to obtain $(U, \tau, B, \pi_2)$.

\end{itemize}

Up to a factor $(1+o(1))$ we have $\Delta N$
possibilities for $(U^*, \tau^*, B^*)$ and
$N$ possibilities for $(U, \tau, B)$ where we have
the uniform distribution on both spaces.
It follows from Lemma~\ref{lemma:pi2}
that the probability of the produced four-tuple
is at most $\Delta 2^{O(s)} (\log n/\Delta^2)^s$ that of
initial stared four-tuple.

Recalling that we have $2^{O(s)} (\Delta t)^s (\log n)^{2s}$
possibilities for the external information, 
we can conclude that if we sum
$\Pr[(U, \tau, B, \pi_2)]$ over all tuples producing
a canonical decision tree of depth at least $s$,
then this is bounded by 
$$
2^{O(s)} (\Delta t)^s (\log n)^{2s} \Delta (\log n/\Delta^2)^s
\leq  \Delta 2^{O(s)} ((\log n)^3 t/\Delta)^s.
$$
This proves Lemma~\ref{lemma:switch} in the case when there are no holes.

If we allow $r$ holes then the number of possibilities
of $U^*$ increases by a factor at most $n^{2r}$.
The number of $\tau^*$ is different as some sums are
no longer $0$, but the sum at each mini-square is
uniquely determined by $U^*$.  Once this is fixed, the 
number of $\tau^*$ is at most the number of almost
complete matchings by Lemma~\ref{lemma:max0}.

For there to be a hole, by the balance condition,
all the at least $C \log n /8$ edges between two super-squares must
be present in $J^*$.  This implies that $r= O(s/ \log n)$
and the factor $n^{O(r)}$ can be absorbed in the 
factor $2^{O(s)}$.  

Another difference is that if we have $r$ holes then
$\pi_2^*$ contains $k+r/2-s$ pairs.  Also the factor
resulting from this change can be absorbed in
the $2^{O(s)}$ factor.  This completes the proof also in
case when there are holes.
\end{proof}

\section{The lower bound for proof size}\label{sec:size}

In this section we establish one of the two main theorems
of the paper.

\begin{theorem}\label{thm:size}
Assume that $d \leq O(\frac{\log n}{\log {\log n}})$ and let
$n$ be an odd integer. Then any depth-$d$ Frege proof of
the functional onto $\PHP$ on the $n \times n$ grid requires total size
$$
\exp (\Omega (n^{1/(2d-1)} (\log n)^{O(1)})).
$$
\end{theorem}

\begin{proof}
The proof follows the standard path.  We use
a sequence of restrictions and after the $i$th
restriction any sub-formula of the proof
of original depth at most $i$ is in the range of the $t$-evaluation.

Assuming this, after $d$ restrictions, any sub-formula
of the proof is in the range of the $t$-evaluation.
By Lemma~\ref{lemma:reduced} what remains is a smaller
functional onto $\PHP$ instance and
by Lemma~\ref{lemma:noproof}, provided the
size of the remaining grid is significantly larger than
$t$, the proof cannot derive contradiction.  Let us give
some details.

In the base case of depth $0$, each literal naturally gives
a $1$-evaluation.  Now
suppose that the size (and hence the number of sub-formulas)
of the proof is $2^S$.

We apply a sequence of restrictions, the first
one with $\Delta= \Omega ((\log n) ^3)$ and later with
$\Delta= \Omega (S(\log n) ^3)$ where we assume that
the implied constant is large enough.  We claim that,
with high probability, after step $i$ each sub-formula that was originally
of depth at most $i$ is now in the range of a $2S$-evaluation.

Indeed, consider Lemma~\ref{lemma:switch} and take
constants large enough so that the failure probability for
an individual formula is bounded by $2^{-2S}$.  By the
union bound, except with probability $2^{-S}$, at step
$i$ all formulas originally of depth $i$ are
in the range of a $2S$ evaluation.

The parameter $n$ turns into $\Theta (n/R\Delta^2)$ by one
application of a single restriction and
hence after the $d$ restrictions, and using
that $R=\Theta (\log n)$,  we are left with
a grid of size
$$
n 2^{-cd} S^{2-2d} (\log n)^{-7d}
$$
for some constant $c$.  There is 
a $2S$-evaluation which have every sub-formula
of the proof in its range.  By Lemma~\ref{lemma:noproof}
provided that 
$$
n \geq 300 \cdot 2^{cd} (\log n)^{7d} S^{2d-1}.
$$
such a proof cannot refute the $\PHP$ and we get
a contradiction.  Rearranging we get the bound for $S$
claimed in the theorem.
This concludes the proof.
\end{proof}

Next we turn to the multi-switching lemma.

\section{Multi-switching} \label{sec:mswitch}

The extension to multi-switching only
requires minor modifications and we start by
stating the important lemma.

\begin{lemma}   \label{lemma:multiswitch}
  There is a constant, $A$, such that the
  following holds.
  Consider formulas $F_i^m$, for $m \in [M]$
  and $i \in [n_m]$, each associated with a decision tree of
  depth at most $t$ and let $F^m= \lor_{i=1}^{n_m} F_i^m$.
  Let $\sigma$ be a random full restriction from the 
  space of restrictions defined in Section~\ref{sec:restrictions}.
  Then the probability that $(F^m)_{m=1}^M$
  cannot be represented by an $\ell$ common partial
  decision tree of depth at most $4s$ is at most
  $$
  \Delta M^{4s/\ell} \big(A(\log n)^{6} t \Delta^{-1}\big)^{s}.$$
\end{lemma}

\begin{proof}
As in the proof of the single switching lemma 
we outline a process to construct an $\ell$ common decision tree.
We prove that this process is unlikely to ask more than
$4s$ variables in a similar way to the single switching
lemma.

Let us define the process of constructing an $\ell$ common
decision tree.  
We treat the formulas for increasing values of $m$.
We have counter $j$, starting at $0$, indicating the number
of times we have found a long branch in a decision tree.  
We have an information set $I^+$ which initially is empty.

If a formula $F^m$ admits
a decision tree of depth at most $\ell$ under $\rho$ and
the current $I^+$ then we proceed
to the next formula.  
If not, we set $m_j=m$ and
execute the process of forming a canonical
decision tree.  It discovers a branch of length at least
$\ell$ in this tree and constructs
the corresponding sets $J_i^j$ for $i=1, 2 \ldots g_j$.
We also have the corresponding information sets $I_i^j$.  

After this branch has been discovered we ask, now
in the $\ell$ common decision tree, for the
partner of any chosen element in $I^{j*}= \cup_{i=1}^{g_j} I_i^j$.
The answers jointly with the matching pairs from $\pi_2$ in 
$I^{j*}$ form a new information set $I^{j*+}$.
This information set is added to $I^+$ and we
look for the next formula that needs a decision
tree of depth at least $\ell$.  It might be the
same formula $F^{m_j}$, but the situation is new as
we have an updated set $I^+$.

Properties follow along the same lines as in the
single switching lemma and we start with the lemma
corresponding to Lemma~\ref{lemma:disjoint}.

\begin{lemma}\label{lemma:mdisjoint}
The support sets of $J_i^j$ are disjoint.  The support of
$J_i^j$ is also disjoint with the support of $I_{i'}^{j'}$ as
long as $i'\not= i$ or $j'\not= j$.
\end{lemma}

\begin{proof}
The parts from $\pi_2$ are obviously disjoint and
we need only consider the chosen mini-squares.
For the same value of $j$, this lemma follows by the
same argument as Lemma~\ref{lemma:disjoint}.  As $I^{j*+}$
contains each $J^{j}_i$ and sets $J_i^{j'}$ for $j'> j$ are
disjoint from $I^{j*+}$ the general case is
also true.
\end{proof}

Let $J^{j*}=\cup_{i} J_i^j$ be the information set obtained
during the $j$th stage.
It contains at least $\ell$ matched
pairs and let us assume that the true number is $s_j$. 
Let  $J^*=\cup_j J^{j*}$ and define $\rho^*$ as
$\rho$ joint with information set $J^*$.  
We need to describe a process to find $\rho$
from $\rho^*$ and the set of formulas.

Let us
immediately point out that the information which $F^{m_j}$
are processed is contained in the external information.
As at most $4s/\ell$ formulas are processed this gives an
extra factor $M^{4s / \ell}$ in the size of the external
information.  This factor appears in the bounds of the
lemma and hence we proceed to analyze the rest of the
process.

Once we know the value of $m_j$ we can run the
reconstruction process based on  $F^{m_j}$ finding the
sets $I_i^j$ and $J_i^j$ exactly
as in the single switching lemma. There is no need to change
this procedure.

A difference compared to the single switching lemma is
the sets $I^{*j+}$.  Again the information pieces from
$\pi_2$ do not give a problem as they never change.
For each chosen mini-square in $I_i^j$ we need to specify a partner 
inside $I^{*j+}$.  This may or may not belong to some $I_{i'}^j$.
If it does, we can specify this at cost $O(1)$ but if not
we can specify it at cost $O(1)+ \log \log n$ bits as it is alive
in $\rho^*$.

For each information piece in $J_i^j$ we thus have,
in the worst case, to discover one mini-square
at cost $t$, one at cost $\Delta$ and then
six mini-squares at cost $O(1)+\log \log n$ bits.  These
are two partners in $I_i^j$ and up to four
partners in $I^{*j+}$.
This implies that the total external cost at stage
$j$ is at most $s_j(O(1)+6 \log \log n+\log t + \log {\Delta})$ bits.

As any edge in $J_i^j$ might result in four queries in the
common decision tree, if we have $4s$ questions in
this tree, then $\sum s_j\geq s$ and let us for convenience
assume we have equality.  The total cost of external information
is in such a case $2^{O(s)} (\log n)^{6s} (t \Delta)^s$.

Lemma~\ref{lemma:pi2} applies also in the current case and from 
a similar calculation to that of the standard
switching lemma, we conclude that the total probability
of quadruples $\tau$, $U$ $\pi_2$, $B$
that can be reconstructed is bounded by 
$$M^{4s/\ell}\Delta 2^{O(s)} ((\log n)^{6} t \Delta^{-1})^s.$$
This completes the proof.
\end{proof}

\section{The lower bound for number of
lines}\label{sec:nlines}

We finally arrive at our second main theorem.

\begin{theorem}\label{thm:lines}
Assume that $d \leq O(\frac{\log n}{\log {\log n}})$ and let
$n$ be an odd integer and $M$ a parameter. 
Then any depth-$d$ Frege proof of
the functional and onto $\PHP$ on the $n \times n$ grid where each line
is of size $M$ requires at least
$$
\mbox{exp} \left(\Omega \left(\frac {n}{(\log M)^{2d-1} (\log n)^{O(1)}}\right)\right)
$$
lines.
\end{theorem}

\begin{proof}
Given that we established the multi-switching lemma
the situation is now essentially the same as 
in \cite{jhkr}.  

Parameters are similar as in the proof for the total proof
size.   We use $\ell= \log M$ in all applications
of the Lemma~\ref{lemma:multiswitch}
while we have $t=1$ in the
first application and in later applications $t=\ell$.
With these choices $M^{4s /\ell}$ equals
$2^{O(s)}$ and can be included in the constant $A$.

We set $\Delta = \Theta ((\log n)^7 t)$ to ensure that the 
base of the exponent (including the contribution from the $M^{4s/ \ell}$
factor) is bounded by $\frac 1{32}$.  Now suppose the
proof has $N$ lines.  

In our first application we set $4s_1= \log N$ and
we conclude, by the union bound,  that each line with
high probability admits its own $s_1$-common
$\ell$-evaluation.

In the second iteration we have $2^{s_1} N $ sets of
formulas to consider.  This is the case as we need to consider
each leaf of each decision tree in each line.
By setting $s_2= 2 s_1$ we can apply the union
bound and conclude that with high probability all
depth 2 formulas are now in the range of the
respective $\ell$-evaluations.

Continuing similarly, in the $i$th iteration we use $s_i=2^{i-1} \log N$.
The final common decision tree is 
of depth at most $\sum_{j=0}^{i-1} s_j= O(2^d \log N)$.

After $d$ rounds of restrictions we have reduced the 
side length of the grid on which we define the $\PHP$ from $n$ to $n/ (\log n)^{O(d)} (\log M)^{2d-2}$.
If this is larger than $300(\sum_{j=0}^{d} s_j + \log M)$
we can conclude that there is no proof and this gives
the bound of the theorem.
\end{proof}

\section{Final words}\label{sec:conclusion}

It seems that to be able to prove a switching lemma for a
space of restrictions one essential property is that for any
given variable, the probability of setting it to either constant
should be significantly higher than keeping it undetermined.
In the unrestricted $\PHP$ the probability that any variable is true is 
about $\frac 1n$ and to make the probability of the variables
remaining undetermined smaller we must go from size $n$ to a size
smaller than $\sqrt n$.  With such a quick decrease in
size we can only apply the switching lemma $O (\log \log n)$
times.  In the graph $\PHP$ on the grid the
probability of a variable being true to about $\frac 14$
and at the same time the probability of keeping a variable
undetermined is about $\log n/\Delta$.
It is harder to preserve a graph $\PHP$, but this is 
made possible by using augmenting paths as the new
variables.  Thus it seems that both decisions were forced
upon us but certainly there might be other possibilities and
it would be interesting to see alternative proofs for a
switching lemma for a space of restrictions that preserve
the $\PHP$.

In this paper we have proved yet another switching lemma and 
it might not even be the last one.  There are many remaining
questions in both proof complexity and circuit complexity 
and some might be attackable by these types of techniques.
It would be very interesting, however, to find a different
technique to attack questions of Frege proofs where each
formula is relatively simple.

While proofs of switching lemmas are non-trivial, the 
properties of the proof we use are rather limited.
In the assumed proof of contradiction, after the
restrictions, the proof, more or less, only contain formulas of constant
size.  It is not difficult to see that such
a proof cannot find a contradiction for 
a set of axioms that are locally consistent.  
It would be interesting with a reasoning that used
more interesting properties of being a proof.

\section*{Acknowledgments}  

I am deeply grateful to
Svante Janson for suggesting the proof of Lemma~\ref{lemma:notfull}.
I am also very grateful to Per Austrin, Kilian Risse, Ben Rossman,
and Aleksa Stankovi\'c
for a sequence of discussions that lead to the start of the
ideas for the current paper.  
Kilian has also greatly helped to improve the quality of the writing
of the current paper.  This is partly due to discussions about
writing \cite{jhkr}, but also when discussing details of
the current paper.
Finally, I thank the referees of
this paper for several suggestions on how to improve the 
presentation.
The research leading to this publication
was supported by the Knut
and Alice Wallenberg foundation.  

\printbibliography

\newpage
\appendix

\section{Proof of Lemma~\ref{lemma:conssub}} \label{sec:proof}

The purpose of this section is to prove Lemma~\ref{lemma:conssub}, restated
here for convenience.

\Conslemma*

\begin{proof}
Suppose the matching $M$ is of size $t$ and
let $M_0$ be a sub-matching of size $t_0$.  We have
sets of $S$ of $T$ and an extension
of $M$ to a matching of $S \times T$.  This,
of course, is also an extension of $M_0$ and if both these
sets are of size at most $48t_0$ we are done,
so assume that this is not the case and we need
to decrease the sizes of $S$ and $T$.

We need to take a
closer look on which sub-areas of the grid allow
matchings.  This is traditionally stated
as ``tiling an area in the plane with dominos of size 2''.
We need to recall some facts about such tilings
and we use results from \cite{thiant} and \cite{fournier}.
For the convenience of the reader we give the argument
from scratch.   

A ``figure'' is a number of black
and white squares.  An example is given in Figure~\ref{fig:ex1}.
Each square has four edges
along its sides and two squares are connected 
if they share an edge (and not if they only
meet at a corner).  Connected squares are thus of
different colors.  We are interested in connected
figures but not necessarily simply connected.
Each figure has a perimeter which is the set of edges
which only belong to one square of the figure.
There is an outside perimeter which is a simple path and
if the figure is not simply connected, we have
perimeter(s) around any interior hole(s).

\medskip
\begin{figure}[h]
\begin{center}
\begin{tikzpicture}[scale=1.288]

\fill[lightgray] (1.0,0) rectangle(1.5,.5);
\fill[lightgray] (0.5,0.5) rectangle(1.0,1.0);
\fill[lightgray] (1.5,0.5) rectangle(2.0,1.0);
\fill[lightgray] (2.5,0.5) rectangle(3.0,1.0);
\fill[lightgray] (2.0,1.0) rectangle(2.5,1.5);
\fill[lightgray] (0.5,1.5) rectangle(1.0,2.0);
\fill[lightgray] (1.5,1.5) rectangle(2.0,2.0);
\fill[lightgray] (1.0,2.0) rectangle(1.5,2.5);
\fill[lightgray] (2.0,2.0) rectangle(2.5,2.5);

\fill[black] (0.5,0) rectangle(1,.5);
\fill[black] (1.5,0) rectangle(2,.5);
\fill[black] (0,.5) rectangle(0.5,1);
\fill[black] (1,.5) rectangle(1.5,1);
\fill[black] (2,.5) rectangle(2.5,1);
\fill[black] (0.5,1) rectangle(1,1.5);
\fill[black] (0,1.5) rectangle(0.5,2);
\fill[black] (1,1.5) rectangle(1.5,2);
\fill[black] (2,1.5) rectangle(2.5,2);

\draw (1.0,0) rectangle(1.5,.5);
\draw (0.5,0.5) rectangle(1.0,1.0);
\draw (1.5,0.5) rectangle(2.0,1.0);
\draw (2.5,0.5) rectangle(3.0,1.0);
\draw (2.0,1.0) rectangle(2.5,1.5);
\draw (0.5,1.5) rectangle(1.0,2.0);
\draw (1.5,1.5) rectangle(2.0,2.0);
\draw (1.0,2.0) rectangle(1.5,2.5);
\draw (2.0,2.0) rectangle(2.5,2.5);

\draw (0.5,0) rectangle(1,.5);
\draw (1.5,0) rectangle(2,.5);
\draw (0,.5) rectangle(0.5,1);
\draw (1,.5) rectangle(1.5,1);
\draw (2,.5) rectangle(2.5,1);
\draw (0.5,1) rectangle(1,1.5);
\draw (0,1.5) rectangle(0.5,2);
\draw (1,1.5) rectangle(1.5,2);
\draw (2,1.5) rectangle(2.5,2);

\draw [->] (2.0,0) -- (1.7,0);
\draw [->] (1.5,0) -- (1.2,0);
\draw [->] (1.0,0) -- (0.7,0);
\draw [->] (0.5,0) -- (0.5,0.3);
\draw [->] (0.5,0.5) -- (0.2,0.5);
\draw [->] (0,0.5) -- (0,0.8);
\draw [->] (0,1.0) -- (0.3,1.0);
\draw [->] (0.5,1.0) -- (0.5,1.3);
\draw [->] (0.5,1.5) -- (0.2,1.5);
\draw [->] (0,1.5) -- (0,1.8);
\draw [->] (0,2) -- (0.3,2);
\draw [->] (0.5,2) -- (0.8,2);
\draw [->] (1,2) -- (1.0,2.3);
\draw [->] (1,2.5) -- (1.3,2.5);
\draw [->] (1.5,2.5) -- (1.5,2.2);
\draw [->] (1.5,2) -- (1.8,2);
\draw [->] (2,2) -- (2,2.3);
\draw [->] (2,2.5) -- (2.3,2.5);
\draw [->] (2.5,2.5) -- (2.5,2.2);
\draw [->] (2.5,2) -- (2.5,1.7);
\draw [->] (2.5,1.5) -- (2.5,1.2);
\draw [->] (2.5,1) -- (2.8,1);
\draw [->] (3,1) -- (3,0.7);
\draw [->] (3,0.5) -- (2.7,0.5);
\draw [->] (2.5,0.5) -- (2.2,0.5);
\draw [->] (2,0.5) -- (2.0,0);

\draw [->] (2.0,1) -- (2,1.3);
\draw [->] (2,1.5) -- (1.7,1.5);
\draw [->] (1.5,1.5) -- (1.2,1.5);
\draw [->] (1.0,1.5) -- (1.0,1.2);
\draw [->] (1.0,1.0) -- (1.3,1.0);
\draw [->] (1.5,1.0) -- (1.8,1);

\end{tikzpicture}
\end{center}
\caption{A figure that cannot be tiled.  It is not simply
connected and we show the assumed orientation of the perimeter.} \label{fig:ex1}
\end{figure}

It is convenient to use directed paths and
we direct the perimeter of a figure in the clockwise
order and the perimeters of the holes in the counter-clockwise
order.  In each case, the figure lies to the right of each
perimeter edge when the edge is traversed in its
intended direction.
For notational convenience we let the coordinates of the
corners of the squares be the integer points.

\begin{definition}
The {\em cost} of a directed edge is 1 if the square to its
right (when traversed in the intended direction) is white
and -1 otherwise.  The cost of a path is the sum
of the costs of its edge.  We denote the cost of
a path $P$ by $c(P)$.
\end{definition}

The following lemma is one reason why the given
definitions are useful.

\begin{lemma}\label{lemma:boundary}
Assume we direct the perimeter as described, i.e.
the outer perimeter in the clockwise direction and
the perimeter around any holes in the counter-clockwise
direction.  If the figure contains
$b$ black squares and $w$ white squares
then the total cost of the perimeter is $4(w-b)$.
\end{lemma}

\begin{proof}
For each square in the figure consider
the four edges of the perimeter of this
square when it is traversed in
the clockwise order.   The total cost
of all edges is then $4(w-b)$ as each white square contributes
4 and each black square -4.  The 
edges internal to the figure are traversed once
in each direction and these two contributions
cancel each other.  The edges traversed once
give the cost of the perimeter and hence
the lemma follows.
\end{proof}

Suppose we are given a figure, $A$, with equally many
black and white squares and we want to determine
whether it admits a matching.  As we can treat
each connected component separately we can assume
that $A$ is connected.
We use Hall's marriage theorem to
study this question.  Let $L$ be a set of
black squares and $N(L)$ be the set of white
neighbors, both within $A$.  It follows from 
Hall's marriage theorem that a necessary and sufficient
condition for $A$ to be matchable is that for any $L$ we have $|N(L)| \geq L$.
For a set $L$ that potentially violates this condition let us
look at the figure $A_0=L \cup N(L)$.  This contains
more black squares than white squares and hence
the cost of its perimeter must be negative.

This perimeter contains edges from the perimeter
of $A$ and some edges internal to $A$.  Note that the 
latter edges all have cost one as there must be a white
square to its right.  Indeed if there was a black square
to its right, $N(L)$ would contain also the square to its left
and hence the edge would not belong to the perimeter.
This property implies that these edges internal to
$A$ must alternate in the sense that every other
edges is vertical and every other edge is horizontal.
We call such a sub-figure ``white bounded''.  Using this
terminology we have the following lemma which is a combination
of the above reasoning and Lemma~\ref{lemma:boundary}.

\begin{lemma}\label{lemma:nomatch}
A figure with an equal number of black and white squares
is matchable unless it contains
a connected white-bounded sub-figure with a perimeter
of negative cost.
\end{lemma}

For instance in Figure~\ref{fig:ex1} one such sub-figure
can be found by 10 squares down and to the left.
This sub-figure contains 6 black and 4 white
squares.

Let us return to our matching $M_0$
which can be extended to a matching of $S \times T$
where $S$ and $T$ both are of total size 
at most $48t$.  The aim is to decrease these sizes.
In order to do this we define a new notion.  Let
$I= [a,b]$ be an interval and $K$ a multiset contained
in the interior of $I$.  Define the ``left function'' of
$I$ as follows.

\begin{itemize}

\item $f^I_l(a)=0$.

\item For $a < i \leq b$ set $f^I_l (i) = f^I_l(i-1)+\delta_i$ where
$\delta_i = 2$ if $i \not \in K$ and otherwise $\delta_i = -k_i/2$ where
$k_i$ is the number of copies of $i$ in $K$.

\end{itemize}

Similarly, we have a right function, defined by $f^I_r(b)=0$ and
$f^I_r(i)=f^I_r(i+1)+\delta_i$.   A set, $T$,
covers a set $K$ if for each $x\in K$ we have $x\in T$.  
We need slightly more than a standard cover and to formulate
this it is natural to think of any set of integers as
a set of maximal and disjoint intervals.  It is easy to
see that this can be done in a unique way.  As the correspondence
is immediate we blur the distinction between the set of points
and the collection of maximal and disjoint intervals.

\begin{definition}
A set of intervals, $S$,
{\em well covers} a multi-set $K$
if $S$ covers $K$ and for each $I\in S$ 
we have $f_l^I (i) \geq 0$ and $f_r^I (i) \geq 0$ for
each $i \in I$.
\end{definition}

We have the following lemma. 
\begin{lemma} \label{lemma:wellcover}
Any set $K$ of cardinality $t$ can be well covered by
a collection of intervals of even length of total size
$6t$.
\end{lemma}

\begin{proof}
We add the points of $K$ one by one and build the set of
intervals at the same time maintaining the
property that the current intervals form a well cover of the
so far added points.  For each point in $K$ we add
at most 5 points to $S$.  We make the intervals of even length
in the end.  

We start with $K$ empty and no intervals.
Suppose we want to add $p$ to $K$.
If $p$ is not in $S$, we add it.  We find the
two points to the left of $p$ which are as close as
possible to $p$ and which are currently not in $S$ and add
them to $S$.  Similarly we add two points to the right of $p$.

Obviously $S$ cover $K$ and we need to prove that it also
well covers.  We establish
this by induction.  Clearly this is true if $K$ is a
singleton and we need to study adding a general point $p$ to $K$.

As intervals not touched by the change remain the same,
and left and right are symmetric, it is sufficient to look
at $f^I_l$ for the interval, $I$, that the added point $p$
ends up in.  The interval $I$ is the union of some
intervals $I^i$ existing before the addition of $p$,
jointly with $p$ and the four added points.

Define a function $f$ on $I$ by setting $f(x)=f_{l}^{I^i}(x)$
if $x\in I^i$ and $f(x)=0$ otherwise.  We claim
that $f_l^I(x) \geq f(x)+\delta_x$ where, for $x$ strictly
to the left of $p$, $\delta_x$ is twice
the number of points added strictly to the left $x$.
If $x$ equals $p$ or is to the right of $p$ then $\delta_x = 0$.
Clearly this claim is sufficient to establish that $f_l^I$ remains non-negative.

The claim uses the alternative characterization of $f^I_l(x)$ as being
twice the number of points to the left of $x$ that belongs to  $I\setminus K$
minus half the number of points in $I \cap K$ to the left of or
equal to $x$.
If there was no merger of interval the bound on $f_l^I$ is obvious.
That mergers do not decrease $f_l^I$ follows from
the fact that the contribution to $f_l^I$ from $I^i$ is non-negative
due to the non-negativity of $f_l^{I^i}$.  We conclude that the
claim holds.

Finally, we add a point to any interval in $S$ of odd
size.  As each interval is of length at least 5,
this multiplies its length by at most $6/5$.
\end{proof}

The squares of $M_0$ gives a number of connected figures
and consider any point on the perimeter of one of these
figures.  We define $K_1$ to be
the set of all first coordinates of these points, considered
as a multiset.  This is clearly a set of size at
most $8t_0$.   We let $S_0$ be the
set given by Lemma~\ref{lemma:wellcover} that well
covers $K_1$.  Similarly considering the second
coordinates we get a set $K_2$ and let $T_0$ be a well cover
of this set.   If either $S_0$ or $T_0$ contains points
outside $[n]$ we drop these point(s).  
By Lemma~\ref{lemma:wellcover} both $S_0$ and $T_0$ are
of size at most $48t_0$ and hence Lemma~\ref{lemma:conssub}
follows from Lemma~\ref{lemma:s0t0} below.
\end{proof}

\begin{lemma}\label{lemma:s0t0}
$M_0$ can be extended to a matching 
of $S_0 \times T_0$.
\end{lemma}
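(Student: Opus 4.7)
\medskip
\noindent
\textbf{Proof plan.}
The plan is to apply Lemma~\ref{lemma:nomatch} to the figure $G := (S_0 \times T_0) \setminus F_{M_0}$, where $F_{M_0}$ denotes the set of squares covered by $M_0$. Since $M_0$ matches opposite-colored pairs, extending $M_0$ to a matching of $S_0 \times T_0$ is equivalent to tiling $G$. First I would check that $G$ has equally many black and white squares: every interval in $S_0$ and $T_0$ has even length by construction, so each sub-rectangle $I \times J$ with $I \in S_0$, $J \in T_0$ has even dimensions and balanced color counts, and deleting the squares of $M_0$ preserves balance.

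Next, assume for contradiction a connected white-bounded sub-figure $A_0 \subseteq G$ has strictly more black than white squares. Using Lemma~\ref{lemma:boundary}, the directed perimeter of $A_0$ then has strictly negative cost. Decompose this perimeter into \emph{interior} edges (whose non-$A_0$ side also lies in $G$) and \emph{exterior} edges (whose non-$A_0$ side lies in $F_{M_0}$ or outside $S_0 \times T_0$). Because $A_0$ is white-bounded, every interior perimeter edge, when traversed so that $A_0$ is on its right, has a white square (an element of $N(L)$) on its right and thus contributes cost $+1$. Consequently the total contribution of the \emph{exterior} perimeter must be strictly negative.

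The heart of the argument is then to show that this exterior contribution is in fact non-negative, via the well-covering property. Project the exterior perimeter onto its two coordinate axes: each vertical exterior edge contributes to an account indexed by its first coordinate (tracked by $S_0$ and $K_1$), and each horizontal one to an account indexed by its second coordinate (tracked by $T_0$ and $K_2$). Exterior edges that lie on the outer boundary of an interval $I \in S_0$ are precisely the ``boundary'' edges whose $+2$ contribution enters $f^I_l$ and $f^I_r$, while exterior edges of $A_0$ adjacent to a perimeter corner of $F_{M_0}$ at coordinate $i$ contribute (at most) the deduction $-k_i/2$ accounted for in those same functions. The conditions $f^I_l(i) \ge 0$ and $f^I_r(i) \ge 0$ for every $i \in I$ say that in every prefix and every suffix of $I$ the $+2$ contributions dominate the $-k_i/2$ deductions, so summing over all intervals in $S_0$ and $T_0$ gives a non-negative exterior cost, contradicting negativity.

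The main obstacle is the accounting in the third step: one must verify that distinct exterior perimeter edges of $A_0$ touching $F_{M_0}$ inject into the multiset $K_i$ (rather than double-counting), and that every exterior edge of $A_0$ on the outer boundary of $S_0 \times T_0$ is actually matched by a $+2$ term in one of the left/right functions. The white-bounded hypothesis, which forces $A_0$ to approach $F_{M_0}$ from one side along each perimeter segment, together with the fact that the intervals in $S_0$ and $T_0$ are disjoint, is what should drive the injectivity; and the bookkeeping of which prefix of which interval each exterior edge belongs to is what converts the 1-dimensional non-negativity of $f^I_l, f^I_r$ into a global 2-dimensional lower bound on the perimeter cost.
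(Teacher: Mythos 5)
Your opening is the same as the paper's: you form the figure $G = (S_0 \times T_0) \setminus F_{M_0}$, check color balance, and apply Lemma~\ref{lemma:nomatch} to produce a white-bounded connected sub-figure $A_0$ with negative perimeter cost which you then try to rule out. The gap is in the accounting. You split the perimeter into ``interior'' and ``exterior'' edges and try to show the exterior contribution alone is non-negative, crediting each ``$+2$'' in $f_l^I, f_r^I$ to exterior edges lying on the boundary of $S_0 \times T_0$. This misreads the definition of the left/right functions: the $+2$ is assigned to a coordinate $i$ \emph{not} in the multiset $K$ of $M_0$-perimeter projections, and in the actual argument such a coordinate is crossed by edges in the \emph{interior} of $A_0$ (one horizontal edge of cost $1$ plus two halves of vertical edges), which is where the compensating positive cost comes from. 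The boundary edges of $S_0 \times T_0$ only give non-negative cost per boundary segment and play no role in offsetting the $-k_i/2$ deductions. Consequently the exterior contribution need not be non-negative on its own; the paper instead decomposes the outer perimeter into alternating segments $s_i$ (boundary of $S_0\times T_0$ / interior-plus-$M_0$-perimeter) and shows each segment has non-negative cost, with Claim~\ref{claim:lbound} using the interior edges inside a segment to pay for the $M_0$-perimeter deductions inside the same segment.

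A second, independent gap: you do not treat holes of $A_0$. The figure $A_0$ need not be simply connected, and its perimeter includes the perimeters of any holes, traversed counter-clockwise. Your interior/exterior split does not distinguish these, and the global prefix/suffix bookkeeping via $f_l^I, f_r^I$ does not control them. The paper handles them with a separate structural argument of a rather different flavor: if a hole $H$ had negative perimeter cost, the elements of $M_0$ trapped around $H$ would obstruct matching any even-sided rectangle containing $H$, contradicting the fact that $M_0$ (being a sub-matching of the locally consistent $M$) extends to the original $S \times T$. This step is the only place the original extension to $S \times T$ is used, and it is essential; without it the claim that the perimeter cost is non-negative does not follow.
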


\begin{proof}
Define the figure $A$ to be $S_0 \times
T_0$ with the squares of $M_0$ removed.  Suppose for
contradiction that $A$ cannot be tiled with dominos of size 2.
Then by Lemma~\ref{lemma:nomatch} we get a white-bounded
connected sub-figure, $A_0$, with a perimeter 
of negative cost.  As we only analyze $A_0$, we only need to
consider its connected component in $S_0 \times T_0$
and hence we can assume
that $S_0$ and $T_0$ both are given by single intervals.
The perimeter of $A_0$ contains an outer boundary and the
perimeters of its holes (if any).  First we observe that cost of
the perimeter of any hole is non-negative.  Suppose on the contrary
that you have a hole $H$ with negative cost perimeter.

To have any chance to have a negative cost, the hole $H$ must
contain some elements of $M_0$ and these give
a matching $M_0'$.  If not it has an all white boundary
and positive cost.  Now consider any rectangle, $R$ with even
side lengths that strictly contains $H$.  We claim
that there is no extension of $M_0'$ to cover all of $R$.
This follows as the figure obtained by removing from
$H$ from $R$ has a negative cost boundary.
Furthermore, this figure is of the form $L \cup N(L)$ for a suitable
$L$ and its existence would contradict that $M_0'$ can be extended
to any rectangle with even side lengths that contains
its component in $S\times T$. 

This proves that no hole in $A_0$ contributes a negative
cost and next we establish that also the boundary
perimeter of $A_0$ is of non-negative cost.

If this boundary contains no part of the boundary of $S_0 \times T_0$
then we can apply the above argument and conclude that
the cost is positive.  On the other hand if it is identical
to the boundary of $S_0 \times T_0$ then the cost is 0.

In the remaining case we can 
divide this perimeter into segments $s_i$ numbered
form $0$ to $2t-1$ where even numbered segments
are along the perimeter of $S_0 \times T_0$ and
odd segments contain edges that either are in
the interior of $A_0$ or along the perimeter of
$M_0$.  First note that $c(s_{2j})\geq 0$ for any
$j$.  This follows as they start and end with
edges of cost $1$.  Costs of edges along the perimeter
alternate between $1$ and $-1$ except at the corners
where we get two adjacent edges of the same cost.

Now let us look at the odd numbered components.
\begin{lemma}\label{lemma:poscost}
For each odd $i$ we have $c(s_i) \geq  0$.
\end{lemma}

\begin{proof}
Suppose that $s_i$
starts at point $(a,b)$.  Then either $a$ is in the interior
of $S_0$ or $b$ is in the interior of $T_0$ and as the two 
cases are symmetric we may assume the former.
Let us for the time being assume that $b$ is the smallest
element in $T_0$.   If $s_i$
does not contain any edges from the perimeter of a
component of $M_0$, each edge on the segment has
cost $1$ and the lemma is true.  Otherwise, let $p$ be the last
point of $s_i$ that is on the perimeter of $M_0$.   We claim that the
cost of the part of $s_i$ from $(a,b)$ to $p$ is non-negative.
This establishes the lemma as all
edges after $p$ on $s_i$ are in the interior of $A_0$ and
hence of cost $1$.
Let $b'$ be the largest value of any second coordinate 
of any point on
the segment from $(a,b)$ to $p$.  The following
claim, of which we postpone the proof, finishes the proof,
in the case when $b$ is the smallest element of $T_0$.
\begin{claim}\label{claim:lbound}
The path from $(a,b)$ to $p$ has cost at least $f_l^{T_0} (b')$.
\end{claim}
The case when $b$ is the largest element in $T_0$ is completely
analogous.  We define $b'$ to be the lowest value of the
second component of the segment and use $f_r^{T_0}$ 
in the corresponding claim.  This completes the proof
of Lemma~\ref{lemma:poscost} modulo the claim.
\end{proof}

Let us establish Claim~\ref{claim:lbound}.  Look at the projection
of the given path on the second coordinate.  Consider any $b''$
with $b <b'' \leq b'$ such that no point of $M_0$ has second
coordinate $b''$.  We want to associate cost two with any
such value.
As no point in $M_0$ projects on $b''$ any edge with
a second coordinate $b''$ must be in the interior of $A_0$.
As previously observed in this part we alternate
between vertical and horizontal edges.

Thus there exists one edge with both end points with
second coordinate equal to $b''$.  We associate the
cost $1$ to $b''$ due to this edge.  We must also have
two edges with different second coordinates
and one end point with second coordinate $b''$.
In most situation there is one edge with second coordinate
$b''-1$ and one edge with second coordinate $b''+1$.
The only exception is when $b''=b'$ and this points
does not belong to $M_0$ in which case both other
end points have second coordinate $b''-1$.
For each of these two edges we associate cost $1/2$ with
$b''$ for a total cost of two associated with   $b''$.
Note also that the total cost caused by any edge
in the interior is $1$.

The only edges on the path from $(a,b)$ to $p$ that
give value $-1$ are those from the perimeter of $M_0$.  
We associate the cost $-1/2$ with each of the second
coordinates of its end-points.  These associated costs follow
closely the definition
of $f_l^{T_0}$ and the claim follows
by simple inspection.  Having established the claim we conclude
that the perimeter of the figure $A_0$ cannot have negative
costs.  This contradiction completes the proof of
Lemma~\ref{lemma:s0t0}.
\end{proof}

\end{document}